\newtheorem{theorem}{Theorem}
\newtheorem{corollary}{Corollary}
\newtheorem{definition}{Definition}
\newtheorem{example}{Example}
\newtheorem{lemma}{Lemma}
\newtheorem{remark}{Remark}
\begin{document}

\title{Volatility of Power Grids under Real-Time Pricing\vspace*{0.2in}}
\author{Mardavij Roozbehani,~\IEEEmembership{Member,~IEEE,}, Munther A
Dahleh,~\IEEEmembership{Member,~IEEE}, and Sanjoy K
Mitter,~\IEEEmembership{Member,~IEEE} \thanks{This work was supported by
Siemens AG under "Advanced Control Methods for Complex Networked Systems":
Siemens - MIT cooperation.}\thanks{The authors are with the Laboratory for
Information and Decision Systems (LIDS), Department of Electrical Engineering
and Computer Science, Massachusetts Institute of Technology, Cambridge, MA.
\newline E-mails: \{mardavij, dahleh, mitter\}@mit.edu.}\vspace*{-0.15in}}
\maketitle

\begin{abstract}
The paper proposes a framework for modeling and analysis of the dynamics of
supply, demand, and clearing prices in power system with real-time retail
pricing and information asymmetry. Real-time retail pricing is characterized
by passing on\textbf{\ }the real-time wholesale electricity prices to the end
consumers, and is shown to create a closed-loop feedback system between the
physical layer and the market layer of the power system. In the absence of a
carefully designed control law, such direct feedback between the two layers
could increase volatility and lower the system's robustness to uncertainty in
demand and generation. A new notion of \textit{generalized price-elasticity}
is introduced, and it is shown that price volatility can be characterized in
terms of the system's \textit{maximal relative price elasticity,} defined as
the maximal ratio of the generalized price-elasticity of consumers to that of
the producers. As this ratio increases, the system becomes more volatile, and
eventually, unstable. As new demand response technologies and distributed
storage increase the price-elasticity of demand, the architecture under
examination is likely to lead to increased volatility and possibly
instability. This highlights the need for assessing architecture
systematically and in advance, in order to optimally strike the trade-offs
between volatility, economic efficiency, and system reliability.

\end{abstract}

\markboth{LIDS Report, 2011}{Shell \MakeLowercase{\textit{et al.}}: Bare Demo of
IEEEtran.cls for Journals}

\begin{keywords}
Real-Time Pricing, Volatility, Lyapunov Analysis.
\end{keywords}

\section{Introduction}

\PARstart{T}{he} increasing demand for energy along with growing environmental
concerns have led to a national agenda for engineering a modern power grid
with the capacity to integrate the renewable energy resources at large scale.
In this paradigm shift, demand response and dynamic pricing are often promoted
as means of mitigating the uncertainties of the renewable generation and
improving the system's efficiency with respect to economic and environmental
metrics. The idea is to allow the consumers to react\textbf{--}in their own
monetary or environmental interest\textbf{--}to the wholesale market
conditions, possibly the real-time prices. However, this real-time or near
real-time coupling between supply and demand creates new challenges for power
system operation. The source of a most significant challenge is the
information asymmetry between the consumers and the system operators. Indeed,
real-time pricing under information asymmetry induces additional uncertainties
due to the uncertainty in consumer behavior, preferences, private valuation
for electricity, and consequently, unpredictable reactions to real-time prices.

The existing body of literature on dynamic pricing in communication or
transportation networks is extensive.\textbf{\ }See for instance
\cite{Kelly1998}, \cite{Chiang2007}, \cite{DynamicPricing} and the references
therein. However, the specific characteristics of power systems which can be
attributed to uncertainty in consumer behavior, the close coupling and
real-time interaction of economics and physics, and the reliability and
operational requirements that supply must match demand at all times raise very
unique challenges that need to be addressed.

Various forms of dynamic retail pricing of electricity have been advocated in
economic and engineering texts. In \cite{Borenstein2002}, Borenstein et. al.
study both the theoretical and the practical implications of
different\ dynamic pricing schemes such as \textit{Critical Peak Pricing},
\textit{Time-of-Use Pricing}, and \textit{Real-Time Pricing}. They argue in
favor of real-time pricing, characterized by passing on a price, that best
reflects the wholesale market prices, to the end consumers. They conclude that
real-time pricing delivers the most benefits in the sense of reducing the peak
demand and flattening the load curve. In \cite{Hogan}, Hogan identifies
dynamic pricing, particularly real-time pricing as a priority for
implementation of demand response in organized wholesale energy markets.
Similar conclusions are reached in a study conducted by Energy Futures
Australia \cite{Crosslley2008}.

The appeal of dynamic retail pricing is not limited to theoretical research
and academic studies, and real-world implementations are emerging at a rapid
pace. For instance, California's state's Public Utility Commission has enacted
a series of new energy regulations which set a deadline of 2011 for the state
utilities to propose a new \textit{dynamic pricing} rate structure,
specifically defined as an electric rate structure that reflects the actual
wholesale market conditions, such as critical peak pricing or real-time
pricing \cite{Calmess}. In this paper, we show that directly linking the
consumer prices to the wholesale market prices creates a close-loop feedback
system with the Locational Marginal Prices as the state variables. We observe
that such feedback mechanisms may increase volatility and decrease the
market's robustness to uncertainty in demand and generation. We introduce a
notion of \textit{generalized} price-elasticity, and show that price
volatility can be upperbounded by a function of the system's Maximal Relative
Price-Elasticity (MRPE), defined as the maximal ratio of the generalized
price-elasticity of consumers to the generalized price-elasticity of
producers. As this ratio increases, the system may become more volatile,
eventually becoming unstable as the MRPE exceeds one.

While the system can be stabilized and volatility can be reduced in many
different ways, e.g., via static or dynamic controllers regulating the
interaction of wholesale markets and retail consumers, different pricing
mechanisms pose different consequences on competing factors of interest such
as volatility, operational reliability, economic efficiency, and environmental
efficiency. The intended message is that the design of a real-time pricing
mechanism must take system stability issues into consideration, and that
successful design and implementation of such a mechanism entails careful
analysis of consumer behavior in response to price signals, and the trade-offs
between volatility, reliability, and economic or environmental efficiency.

Prior research relevant to stability of power markets has appeared in several
papers by\ Alvarado \cite{Alvarado1997}, \cite{Alvarado1999} on dynamic
modeling and stability, Watts and Alvarado \cite{Alvarado2004} on the
influence of future markets on price stability, and Nutaro and Protopopescu
\cite{Nutaro2009} on the impact of market clearing time and price signal delay
on power market stability. The model adopted in this paper differs from those
of \cite{Alvarado1997}, \cite{Alvarado1999}, \cite{Alvarado2004}, and
\cite{Nutaro2009} in that we analyze the global properties of the full
non-linear model as opposed to the first-order linear differential equations
examined in these papers. In addition, the price updates in our paper occur at
discrete time intervals and are an outcome of marginal cost pricing in the
wholesale market by an Independent System Operator (ISO), which is consistent
with the current practice in deregulated electricity markets. Furthermore,
beyond stability, we are interested in providing a characterization of the
impacts of uncertainty in consumer behavior on price volatility and the
system's robustness to uncertainties.

The organization of this paper is as follows. In Section \ref{sec:prem} we
present some preliminary concepts and definitions. In Section \ref{DSDM} we
present a mathematical model for the dynamic evolution of supply, demand, and
clearing prices under real-time pricing. Section \ref{sec:main} contains the
main theoretical contributions of this paper: characterizing volatility in
terms of the market's maximal relative elasticity and uncertainty in consumer
behavior. In Section \ref{sec:dis} we qualitatively discuss our results,
compare with some of the results in the literature, and point to some
important questions regarding the trade-offs arising due to uncertainty in
generation and quantifying the value of information. Numerical simulations are
presented in Section \ref{sec:sim}. Finally, we offer some closing remarks and
further directions for future research in Section \ref{sec:concl}.

\section{Preliminaries\label{sec:prem}}

\subsection{Notation}

The set of positive real numbers (integers) is denoted by $\mathbb{R}_{+}$
($\mathbb{Z}_{+}$)$,$ and nonnegative real numbers (integers) by
$\overline{\mathbb{R}}_{+}$ ($\overline{\mathbb{Z}}_{+}$). The class of
real-valued functions with a continuous $n$-th derivative on $X\subset
\mathbb{R}$ is denoted by $\mathcal{C}^{n}X.$ For a vector $v\in\mathbb{R}%
^{l},$ $v_{k}$ denotes the $k$-th element of $v,$ and $\left\Vert v\right\Vert
_{p}$ denotes the standard p-norm: $\left\Vert v\right\Vert _{p}%
\overset{\text{def}}{=}\left(
{\textstyle\sum\nolimits_{i=1}^{l}}
\left\vert v_{i}\right\vert ^{p}\right)  ^{1/p}$. Also, we will use
$\left\Vert v\right\Vert $ to denote any p-norm when there is no ambiguity.
The space of $\mathbb{R}^{l}$-valued functions $h:\mathbb{Z}\mapsto
\mathbb{R}^{l}$ of finite $p$-norm%
\[
\left\Vert h\right\Vert _{p}^{p}=%
{\displaystyle\sum\limits_{t=-\infty}^{\infty}}
\left\Vert h\left(  t\right)  \right\Vert _{p}^{p}=%
{\displaystyle\sum\limits_{t=-\infty}^{\infty}}
\sum\limits_{i=1}^{l}\left\vert h_{i}\left(  t\right)  \right\vert ^{p}%
\]
is denoted by $\ell_{p}\left(  \mathbb{Z}\right)  $ or simply $\ell_{p}$ when
there is no ambiguity. For a differentiable function $f:\mathbb{R}^{n}%
\mapsto\mathbb{R}^{m},$ we use $\dot{f}$ to denote the Jacobian matrix of $f.$
When $f$ is a scalar function of a single variable, $\dot{f}$ simply denotes
the derivative of $f$ with respect to its argument: $\dot{f}\left(  x\right)
=df\left(  x\right)  /dx.$ Since throughout the paper time is a discrete
variable, this notation would not be confused with derivative with respect to
time. Finally, for a measurable set $X\subset\mathbb{R},$ $\mu_{L}\left(
X\right)  $ is the Lebesgue measure of $X.$

\subsection{Basic Definitions}

\subsubsection{Volatility}

\begin{definition}
\label{SAvVDef}\textbf{Scaled} \textbf{Incremental Mean Volatility (IMV):}
Given a signal $h:\mathbb{Z}\mapsto\mathbb{R}^{l},$ and a function
$\rho:\mathbb{R}^{l}\mapsto\mathbb{R}^{m},$ the $\rho$\textit{-scaled
incremental mean volatility} measure of $h\left(  \cdot\right)  $ is defined
as%
\begin{equation}
\overline{\mathcal{V}}_{\rho}\left(  h\right)  =\lim_{T\rightarrow\infty}%
\frac{1}{T}%
{\displaystyle\sum\limits_{t=0}^{T}}
\left\Vert \rho\left(  h\left(  t+1\right)  \right)  -\rho\left(  h\left(
t\right)  \right)  \right\Vert \label{averagevol}%
\end{equation}
where, to simplify the notation, the dependence of the measure on the norm
used in (\ref{averagevol}) is dropped from the notation $\overline
{\mathcal{V}}_{\rho}\left(  h\right)  $.
\end{definition}

To quantify volatility for fast-decaying signals with zero or small scaled
IMV, e.g., state variables of a strictly stable autonomous system, we will use
the notion of scaled \textit{aggregate} volatility, defined as follows.

\begin{definition}
\label{SAgVDef}\textbf{Scaled\ Incremental\ Aggregate\ Volatility\ (IAV):}
Given a signal $h:\mathbb{Z}\mapsto\mathbb{R}^{l},$ and a function
$\rho:\mathbb{R}^{l}\mapsto\mathbb{R}^{m},$ the $\rho$\textit{-scaled
incremental aggregate volatility} measure of $h\left(  \cdot\right)  $ is
defined as%
\begin{equation}
\mathcal{V}_{\rho}\left(  h\right)  =%
{\displaystyle\sum\limits_{t=0}^{\infty}}
\left\Vert \rho\left(  h\left(  t+1\right)  \right)  -\rho\left(  h\left(
t\right)  \right)  \right\Vert .\label{aggregatevol}%
\end{equation}
In particular, we will be interested in the $\log$-scaled incremental
volatility as a metric for quantifying volatility of price, supply, or demand
in electricity markets.
\end{definition}

\begin{remark}
The notions of incremental volatility presented in Definitions \ref{SAvVDef}
and \ref{SAgVDef} accentuate the fast time scale, i.e., high frequency
characteristics of the signal of interest. Roughly speaking, the scaled IMV or
IAV are measures of the mean deviations of the signal from its \textit{moving
average}. In contrast, sample variance or CV (coefficient of variation,\ i.e.,
the ratio of standard deviation to mean) provide a measure of the mean
deviations of the signal from its average, without necessarily emphasizing the
high-frequency characteristics. A slowly-varying signal with a large dynamic
range may have a large sample variance or CV, but a small IMV, and thus will
be considered less volatile than a fast-varying signal with a large scaled
IMV. Since we are interested in studying the fast dynamics of spot prices in
electricity markets and the associated stability/reliability threats, the
scaled IMV and IAV as defined above are more appropriate measures of
volatility than variance or CV.
\end{remark}

\subsubsection{Stability}

The notion of stability used in this paper is the standard notion of
asymptotic stability. Consider the system
\begin{equation}
x\left(  t+1\right)  =\psi\left(  x\left(  t\right)  \right) \label{basicsys}%
\end{equation}
where $\psi\left(  \cdot\right)  $ is an arbitrary map from a domain
$X\subset\mathbb{R}^{n}$ to $\mathbb{R}^{n}.$ The equilibrium $\bar{x}\in X$
of (\ref{basicsys}) is \textit{stable in the sense of Lyapunov} if all
trajectories that start sufficiently close to $\bar{x}$ remain arbitrarily
close to it, i.e., for every $\varepsilon>0$,\ there exists $\delta>0$ such
that%
\[
\left\Vert x\left(  0\right)  -\bar{x}\right\Vert <\delta\Rightarrow\left\Vert
x\left(  t\right)  -\bar{x}\right\Vert <\varepsilon,\quad\forall t\geq0
\]
The equilibrium is \textit{globally asymptotically stable} if it is Lyapunov
stable and for all $x\left(  0\right)  \in X:\lim_{t\rightarrow\infty
}~x\left(  t\right)  =\bar{x}$.

\subsection{Market Structure \label{MarketSec}}

We begin with developing an electricity market model with three participants:
1. The suppliers, 2. The consumers, and 3. An Independent System Operator
(ISO). The suppliers and the consumers are price-taking, profit-maximizing
agents. The ISO is an independent, profit-neutral player in charge of clearing
the market, that is, matching supply and demand subject to the network
constraints with the objective of maximizing the social welfare. Below, we
describe the characteristics of the players in more detail.

\subsubsection{The Consumers and the Producers}

Let $D=\left\{  1,\ldots,n_{s}\right\}  $ and $S=\left\{  1,\ldots
,n_{s}\right\}  $ denote the sets of consumers and producers respectively.
Each consumer $j\in D$ is associated with a value function $v_{j}%
:\overline{\mathbb{R}}_{+}\mapsto\mathbb{R},$ which can be thought of as the
monetary value that consumer $j$ derives from consuming $x$ units of the
resource, electricity in this case. Similarly, each producer $i\in S,$ is
associated with a function $c_{i}:\overline{\mathbb{R}}_{+}\mapsto
\overline{\mathbb{R}}_{+}$ representing the monetary cost of production of the resource.

\textit{Assumption I:} For all $i\in S,$ the cost functions $c_{i}\left(
\cdot\right)  $ are in $\mathcal{C}^{2}(0,\infty),$ strictly increasing, and
strictly convex. For all $j\in D,$ the value functions $v_{j}\left(
\cdot\right)  $ are in $\mathcal{C}^{2}(0,\infty)$, strictly increasing, and
strictly concave.

Let $d_{j}:\mathbb{R}_{+}\mapsto\overline{\mathbb{R}}_{+},$~$j\in D,$ and
$s_{i}:\mathbb{R}_{+}\mapsto\overline{\mathbb{R}}_{+},~i\in S$ be demand and
supply functions mapping price to consumption and production, respectively. In
the framework adopted in this paper, the producers and consumers are
price-taking, utility-maximizing agents. Therefore, letting $\lambda$ be the
price per unit of electricity, we have%
\begin{align}
d_{j}\left(  \lambda\right)   &  =\arg\underset{x\in\overline{\mathbb{R}}_{+}%
}{\max}~~v_{j}\left(  x\right)  -\lambda x,\text{\qquad}j\in D,\label{D P}\\
&  =\max\left\{  0,\left\{  x~|~\dot{v}_{j}\left(  x\right)  =\lambda\right\}
\right\} \nonumber
\end{align}
and%
\begin{align}
s_{i}\left(  \lambda\right)   &  =\arg\underset{x\in\overline{\mathbb{R}}_{+}%
}{\max}~~\lambda x-c_{i}\left(  x\right)  ,\text{\qquad}i\in S.\label{S P}\\
&  =\max\left\{  0,\left\{  x~|~\dot{c}_{i}\left(  x\right)  =\lambda\right\}
\right\} \nonumber
\end{align}
For the sake of convenience in notation and in order to avoid unnecessary
technicalities, unless stated otherwise, we will assume in the remainder of
this paper that $d_{j}\left(  \lambda\right)  =\dot{v}_{j}^{-1}\left(
\lambda\right)  $ is the demand function, and $s_{i}\left(  \lambda\right)
=\dot{c}_{i}^{-1}\left(  \lambda\right)  $ is the supply function. This can be
mathematically justified by assuming that $\dot{v}\left(  0\right)  =\infty,$
and $\dot{c}\left(  0\right)  =0$, or that $\lambda\in\left[  \dot{c}\left(
0\right)  ,\dot{v}\left(  0\right)  \right]  .$

\vspace*{0.1in}

\begin{definition}
The social welfare $\mathcal{S}$ is the aggregate benefit of the producers and
the consumers:\vspace*{0.1in}%
\[
\mathcal{S}=\sum\limits_{j\in D}\left(  v_{j}\left(  d_{j}\right)
-\lambda_{j}d_{j}\right)  -\sum\limits_{i\in S}\left(  \lambda_{i}s_{i}%
-c_{i}\left(  s_{i}\right)  \right)
\]
If $\lambda_{i}=\lambda_{j}=\lambda,$ $\forall i,j,$ we say that $\lambda$ is
a uniform market clearing price, and in this case, we have:\vspace*{0.1in}%
\[
\mathcal{S}=\sum\limits_{j\in D}v_{j}\left(  d_{j}\right)  -\sum_{i\in S}%
c_{i}\left(  s_{i}\right)
\]

\end{definition}

\paragraph{Heterogeneous Consumers with Uncertain Value Functions}

We will consider two models of heterogenous consumers with time-varying value
functions to represent the uncertainty in consumer behavior.

\paragraph*{\textbf{--} \textbf{Multiplicative Perturbation Model}}

The uncertainty in consumer's value function is modeled as%
\begin{equation}
\tilde{v}_{j}\left(  x,t\right)  =\alpha_{j}\left(  t\right)  v_{o}\left(
\frac{x}{\alpha_{j}\left(  t\right)  }\right)  ,\text{\qquad}j\in
D,\label{TVV2}%
\end{equation}
where $v_{o}:\overline{\mathbb{R}}_{+}\mapsto\mathbb{R}$ is a nominal value
function and $\alpha_{j}:\overline{\mathbb{Z}}_{+}\mapsto\mathbb{R}_{+}$ is an
exogenous signal or disturbance. Given a price $\lambda\left(  t\right)  >0$,
under the multiplicative perturbation model (\ref{TVV2}) we have%
\begin{equation}
d_{j}\left(  \lambda,t\right)  =\alpha_{j}\left(  t\right)  \dot{v}_{o}%
^{-1}\left(  \lambda\left(  t\right)  \right) \label{TVDF2}%
\end{equation}
Thus, the same price $\lambda$ may induce different consumptions at different
times, depending on the type and composition of the load.

\paragraph*{\textbf{--} \textbf{Additive Perturbation Model}}

The uncertainty in consumer's value function is modeled as%
\begin{equation}
\tilde{v}_{j}\left(  x,t\right)  =v_{o}\left(  x-u_{j}\left(  t\right)
\right)  ,\text{\qquad}j\in D,\label{TVV1}%
\end{equation}
where $u_{j}:\overline{\mathbb{Z}}_{+}\mapsto\mathbb{R}_{+}$ is exogenous.
Thus, given a price $\lambda\left(  t\right)  >0$, under the additive
perturbation model (\ref{TVV1}), the demand function is%
\begin{equation}
d_{j}\left(  \lambda,t\right)  =u_{j}\left(  t\right)  +\dot{v}_{o}%
^{-1}\left(  \lambda\left(  t\right)  \right) \label{TVDF1}%
\end{equation}

\paragraph*{\textbf{--} \textbf{Aggregation of Several Consumers}}

The aggregate response of several consumers (or producers)\ to a price signal
may be modeled as the response of a single representative agent, although
explicit formula for the utility of the representative agent may sometimes be
too complicated to find \cite{RoozbehaniCDC2010, Hartley1997}. For the case of
$N$ identical consumers with value functions $v_{j}=v_{o},$\quad$j\in D,$ it
can be verified that the aggregate demand is equivalent to the demand of a
representative consumer with value function \cite{RoozbehaniCDC2010}:%
\begin{equation}
v\left(  x\right)  =Nv_{o}\left(  \frac{x}{N}\right) \label{AggV}%
\end{equation}
Suppose now, that the consumer behavior can be modeled via (\ref{TVV2}%
)--(\ref{TVDF2}). Let
\[
\bar{\alpha}\left(  t\right)  =\sum\nolimits_{j=1}^{N}\alpha_{j}\left(
t\right)  ,
\]
and suppose that there exists a nominal value $\bar{\alpha}_{0}$, such that
\[
\bar{\alpha}\left(  t\right)  =\bar{\alpha}_{0}+\Delta\bar{\alpha}\left(
t\right)  =\bar{\alpha}_{0}\left(  1+\delta\left(  t\right)  \right)
\]
where $\delta\left(  t\right)  =\Delta\bar{\alpha}\left(  t\right)
/\bar{\alpha}_{0}$ satisfies $\left\vert \delta\left(  t\right)  \right\vert
<1$. Define $v\left(  x\right)  =\bar{\alpha}_{0}v_{o}\left(  x/\bar{\alpha
}_{0}\right)  .$ It can be then verified that the aggregate demand can be
modeled as the response of a representative agent with value function
\begin{align}
\tilde{v}\left(  x,t\right)   & =\bar{\alpha}\left(  t\right)  v_{o}\left(
\frac{x}{\bar{\alpha}\left(  t\right)  }\right) \nonumber\\
& =\left(  \bar{\alpha}_{0}+\Delta\bar{\alpha}\left(  t\right)  \right)
v_{o}\left(  \frac{x}{\bar{\alpha}_{0}+\Delta\bar{\alpha}\left(  t\right)
}\right) \nonumber\\
& =\left(  1+\delta\left(  t\right)  \right)  v\left(  \frac{x}{1+\delta
\left(  t\right)  }\right) \label{TVV2a}%
\end{align}
The aggregate response is then given by
\begin{equation}
d\left(  \lambda\left(  t\right)  ,t\right)  =\left(  1+\delta\left(
t\right)  \right)  \dot{v}^{-1}\left(  \lambda\left(  t\right)  \right)
.\label{TVDF2a}%
\end{equation}

Similarly, under the additive perturbation model the aggregate behavior can be
represented by%
\begin{align}
\tilde{v}\left(  x,t\right)   & =v\left(  x-u\left(  t\right)  \right)
\label{TVV1a}\\
d\left(  \lambda\left(  t\right)  ,t\right)   & =u\left(  t\right)  +\dot
{v}^{-1}\left(  \lambda\left(  t\right)  \right) \label{TVDF1a}%
\end{align}
where $v\left(  \cdot\right)  $ is given by (\ref{AggV}) and $u\left(
t\right)  =\sum u_{j}\left(  t\right)  .$ The interpretation of (\ref{TVV1a})
and (\ref{TVDF1a}) is that at any given time $t,$ the demand comprises of an
inelastic component $u\left(  t\right)  $ which is exogenous$,$ and an elastic
component $\dot{v}^{-1}\left(  \lambda\left(  t\right)  \right)  $. Another
interpretation is that $\dot{v}^{-1}\left(  \lambda\left(  t\right)  \right)
$ represents the demand of those consumers who are subject to real-time
pricing, and $u\left(  t\right)  $ represents the demand of the
non-participating consumers.

\subsubsection{The Independent System Operator (ISO)}

The ISO is a non-for-profit entity whose primary function\ is to optimally
match supply and demand\textbf{\ }subject to network and operational
constraints. The constraints include power flow constraints (Kirchhoff's
laws), transmission line constraints, generator capacity constraints, local
and system-wide reserve capacity requirements and possibly some other
constraints specific to the ISO \cite{ISO-NE}, \cite{PJM-ISO}, \cite{Litvinov}%
. For real-time market operation, the constraints are linearized near the
steady-state operating point and the ISO optimization problem is reduced to a
convex\textbf{--}typically linear\textbf{--}optimization often referred to as
the \textit{Economic Dispatch Problem (EDP), or the Optimal Power Flow
Problem}. A set of Locational Marginal Prices (LMP)\ emerge as the dual
variables corresponding to the nodal power balance constraints. These prices
vary from location to location as they represent the marginal cost of
supplying electricity to a particular location. We refer the interested reader
to \cite{SCTR1998}, \cite{Litvinov}, \cite{PJM-ISO}, and \cite{Roozbehani2010}
for more details. However, we emphasize that the spatial variation in the LMPs
is a consequence of congestion in the transmission lines. When there is
sufficient transmission capacity in the network, a uniform\ price will
materialize for the entire system. With this observation in sight, and in
order to develop tractable models that effectively highlight the impacts of
the behavior of producers and consumers\textbf{--}quantified through their
cost and value functions\textbf{--}on system stability and price volatility,
we will make the following\ assumptions:

\begin{enumerate}
\item Resistive losses are negligible.

\item The line capacities are high enough, so, congestion will not occur.

\item There are no generator capacity constraints.

\item The system always has sufficient reserve capacity and the marginal cost
of reserve is the same as the marginal cost of generation.
\end{enumerate}

Under the first two assumptions, the network parameters become irrelevant in
the supply-demand optimal matching problem. The third and fourth assumptions
are made in the interest of keeping the development in this paper focused.
They could, otherwise, be relaxed at the expense of a somewhat more involved
technical analysis. The last assumption also implies that we do not
differentiate between actual generation and reserve. A thorough investigation
of the effects of network constraints and reserve capacity markets, whether
they are stabilizing or destabilizing, does not fall within the scope of this
paper. The interested readers may consult \cite{Alvarado1999, Roozbehani2010}
for an analysis of dynamic pricing in electricity networks with transmission
line and generator capacity constraints.

Under the above assumptions, the following problem characterizes the ISO's
optimization problem:
\begin{equation}%
\begin{array}
[c]{ccc}%
\max &  &
{\displaystyle\sum\limits_{j\in D}}
v_{j}(d_{j})-%
{\displaystyle\sum\limits_{i\in S}}
c_{i}(s_{i})\vspace{0.15in}\\
\text{s.t.} &  &
{\displaystyle\sum\limits_{j\in D}}
d_{j}=%
{\displaystyle\sum\limits_{i\in S}}
s_{i}%
\end{array}
\label{Market P}%
\end{equation}

The following lemma which is adopted from \cite{Kelly1997}, provides the
justification for defining the LMPs as the Lagrangian multipliers
corresponding to the balance constraint(s).

\begin{lemma}
\label{Kelly}Let $d^{\ast}=\left[  d_{1}^{\ast},\cdots,d_{n_{d}}^{\ast
}\right]  ,$ and $s^{\ast}=\left[  s_{1}^{\ast},\cdots,s_{n_{s}}^{\ast
}\right]  $ where $d_{j}^{\ast},$~$~j\in D$ and $s_{i}^{\ast},$~$~i\in S$,
solve (\ref{Market P}). There exists a price $\lambda^{\ast}\in\left(
0,\infty\right)  ,$ such that $d^{\ast}$ and $s^{\ast}$ solve (\ref{D P}) and
(\ref{S P}). Furthermore, $\lambda^{\ast}$ is the Lagrangian multiplier
corresponding to the balance constraint in (\ref{Market P}).
\end{lemma}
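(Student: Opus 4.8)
The plan is to recognize (\ref{Market P}) as a concave program and to invoke the Karush--Kuhn--Tucker (KKT) conditions, which under the standing assumptions are both necessary and sufficient. By Assumption I each $v_{j}$ is strictly concave and each $c_{i}$ strictly convex, so the objective $\sum_{j\in D}v_{j}(d_{j})-\sum_{i\in S}c_{i}(s_{i})$ is (separably) strictly concave; the single balance constraint is affine; and the feasible set is nonempty since a maximizer $d^{\ast},s^{\ast}$ is assumed to exist. As the only constraint is an affine equality, a constraint qualification (Slater) holds vacuously, so strong duality applies and optimality of $(d^{\ast},s^{\ast})$ is equivalent to the existence of a multiplier $\lambda^{\ast}\in\mathbb{R}$ at which $(d^{\ast},s^{\ast})$ is a stationary point of the Lagrangian
\[
L(d,s,\lambda)=\sum_{j\in D}v_{j}(d_{j})-\sum_{i\in S}c_{i}(s_{i})+\lambda\Bigl(\sum_{i\in S}s_{i}-\sum_{j\in D}d_{j}\Bigr).
\]

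Next I would write out the stationarity conditions. Using the paper's convention $\dot{v}(0)=\infty$, $\dot{c}(0)=0$ (equivalently, restricting to $\lambda\in[\dot{c}(0),\dot{v}(0)]$), every optimal $d_{j}^{\ast}$ and $s_{i}^{\ast}$ lies in the interior $(0,\infty)$, so the nonnegativity constraints are inactive and differentiating $L$ in $d_{j}$ and $s_{i}$ yields $\dot{v}_{j}(d_{j}^{\ast})=\lambda^{\ast}$ for all $j\in D$ and $\dot{c}_{i}(s_{i}^{\ast})=\lambda^{\ast}$ for all $i\in S$. Since $\dot{v}_{j}$ is strictly decreasing and $\dot{c}_{i}$ strictly increasing, these are exactly the first-order conditions characterizing the maximizers in (\ref{D P}) and (\ref{S P}); and because those agent problems are themselves concave, their first-order conditions are also sufficient. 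Hence $d_{j}^{\ast}=d_{j}(\lambda^{\ast})=\dot{v}_{j}^{-1}(\lambda^{\ast})$ and $s_{i}^{\ast}=s_{i}(\lambda^{\ast})=\dot{c}_{i}^{-1}(\lambda^{\ast})$, i.e. $d^{\ast}$ and $s^{\ast}$ solve (\ref{D P}) and (\ref{S P}) at the common price $\lambda^{\ast}$, which by construction is the multiplier of the balance constraint.

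Finally I would confirm $\lambda^{\ast}\in(0,\infty)$: from $\lambda^{\ast}=\dot{v}_{j}(d_{j}^{\ast})$ and strict monotonicity of $v_{j}$ we get $\lambda^{\ast}>0$, while finiteness of $d_{j}^{\ast}$ (or of $s_{i}^{\ast}$) together with continuity of $\dot{v}_{j}$ (resp.\ $\dot{c}_{i}$) on $(0,\infty)$ gives $\lambda^{\ast}<\infty$. I expect the only delicate point to be the treatment of the boundary $x=0$ in the agents' problems; this is precisely what the convention $\dot{v}(0)=\infty$, $\dot{c}(0)=0$ is designed to remove, collapsing the argument to interior stationarity. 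Since the result is adopted from \cite{Kelly1997}, the write-up can stay at the level of verifying that the hypotheses of the standard concave-duality theorem are met.
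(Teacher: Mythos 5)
Your argument is correct and follows exactly the route the paper indicates: the paper omits the proof, stating only that it rests on Lagrangian duality as in the cited Kelly reference, and your write-up supplies precisely that argument (concavity of the objective, the affine balance constraint, interior stationarity under the convention $\dot{v}(0)=\infty$, $\dot{c}(0)=0$, and identification of the multiplier with the common marginal value/cost). The only cosmetic point is that with a single affine equality constraint no Slater-type condition is needed at all, so your remark that it ``holds vacuously'' could simply be dropped.
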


\begin{proof}
The proof is based on Lagrangian duality and is omitted for brevity. The proof
in \cite{Kelly1997} would be applicable here with some minor adjustments.
\end{proof}

The implication of Lemma \ref{Kelly} is that by defining\ the market price to
be the vector of Lagrangian multiplier corresponding to the balance
constraints, the system operator creates a competitive environment in which,
the collective selfish behavior of the participants results in a system-wide
optimal condition.

\paragraph{Real-Time System Operation and Market Clearing}

Consider the case of real-time market operation and assume that
price-sensitive retail consumers do not bid in the real-time market. In other
words, they do not provide their value functions to the system operator (or
any intermediary entity in charge of real-time pricing). Though, they may
adjust their consumption in response to a price signal, which is assumed in
this paper, to be the wholesale market clearing price. In this case, the
demand is assumed to be inelastic over each \textit{short} pricing interval,
and supply is matched to demand. Therefore, (\ref{Market P}) reduces to
meeting the fixed demand at minimum cost:%
\begin{equation}%
\begin{array}
[c]{ccl}%
\min &  &
{\displaystyle\sum\limits_{i\in S}}
c_{i}(s_{i})\vspace{0.15in}\\
\text{s.t.} &  &
{\displaystyle\sum\limits_{i\in S}}
s_{i}=%
{\displaystyle\sum\limits_{j\in D}}
\hat{d}_{j}%
\end{array}
\label{Market P fixed}%
\end{equation}
where $\hat{d}_{j}$ is the predicted demand of consumer $j$ for the next time
period. We assume that the system operator solves (\ref{Market P fixed}) and
sets the price to the marginal cost of production at the minimum cost
solution. The discrepancy between scheduled generation (which is equal to the
predicted demand) and actual demand is compensated through reserves with the
same marginal costs. Thus, we will not include reserve parameters and
equations explicitly in the model. More details regarding a dynamic extensions
of this model are presented in the next Subsection.

\section{Dynamic Models of Supply-Demand under Information
Asymmetry\label{DSDM}}

In this section, we develop dynamical system models for the interaction of
wholesale supply and retail demand in electricity markets with information
asymmetry. In this context, \textquotedblleft asymmetry of
information\textquotedblright\ refers to the architecture of the information
layer of the market, in which, the market operator has full information about
the cost of supplying the resource (e.g., through the offers of the
producers), but has no information about \textit{valuation} of the resource by
the demand side.
\begin{figure}
[ptb]
\begin{center}
\includegraphics[
height=1.9in,
width=2.2in
]%
{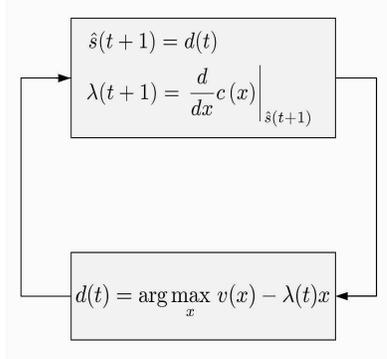}%
\caption{Exant\'{e} Priced Supply/Demand Feedback}%
\label{Feedback 1}%
\end{center}
\end{figure}

The real-time market is cleared at discrete time intervals and the prices are
calculated and announced for each interval\footnote{In most regions of the
United States, such as New England, California, or PJM, the real-time market
is operated in five-minute intervals.}. The practice of defining the clearing
price corresponding to each pricing interval based on the predicted demand at
the beginning of the interval\ is called exant\'{e} pricing. As opposed to
this, ex-post pricing refers to the practice of defining the clearing price
for each pricing interval based on the materialized consumption at the end of
the interval. In ex-post pricing the demand is subject to some price
uncertainty as the actual price will be revealed after consumption has
materialized. In exant\'{e} pricing without ex-post adjustments, the
entity\ in charge of real-time pricing faces the price uncertainty\footnote{In
this paper we combine the role of the ISO\ and that of an entity in charge of
real-time pricing in the retail market. Whether in practice this will be the
case or not, has no influence on the intended message and the results that we
deliver.}, as it will have to reimburse the generators based on the actual
marginal cost of production, while it can charge the demand only based on the
exant\'{e} price. We will present dynamic market models for both pricing
schemes. These models are consistent with the current practice of marginal
cost pricing in wholesale electricity markets, with the additional feature
that the retail consumers adjust their usage based on the real-time wholesale
market price.

\subsection{Price Dynamics under Exant\'{e} Pricing}

Let $\lambda\left(  t\right)  $ denote the exant\'{e} price corresponding to
the consumption of one unit of electricity in the time interval $\left[
t,t+1\right]  .$ Let $d\left(  t\right)  =%
{\textstyle\sum\nolimits_{j\in D}}
d_{j}\left(  t\right)  $ be the actual aggregate consumption during this
interval:
\begin{equation}
d\left(  t\right)  =\sum\limits_{j\in D}d_{j}\left(  t\right)  =\sum
\limits_{j\in D}v_{j}^{-1}\left(  \lambda\left(  t\right)  \right)
.\label{OptDemandInd}%
\end{equation}
Since $v_{j}\left(  \cdot\right)  $ is known only to consumer $j$, at time $t,
$ only an estimate of $d\left(  t\right)  $ is available to the ISO$,$ based
on which, the price $\lambda\left(  t\right)  $ is calculated$.$ The price
$\lambda\left(  t\right)  $ is therefore, the marginal cost of predicted
supply to meet the predicted demand for the time interval $[t,t+1]$. We assume
that the ISO's predicted demand/supply for each time interval ahead is based
on the actual demand at the previous intervals: $\hat{s}_{t+1}=\hat{d}\left(
t+1\right)  =\phi\left(  d\left(  t\right)  ,\cdots,d\left(  t-T\right)
\right)  ,$ $T\in\mathbb{Z}.$ The following equations describe the dynamics of
the market:%
\begin{align}
\sum\limits_{i\in S}\dot{c}_{i}^{-1}\left(  \lambda\left(  t+1\right)
\right)   & =\hat{s}\left(  t+1\right)  =\hat{d}\left(  t+1\right)
\label{MinCost}\\
\hat{d}\left(  t+1\right)   & =\phi\left(  d\left(  t\right)  ,\cdots,d\left(
t-T\right)  \right) \label{PredStep}\\[0.1in]
\sum\limits_{j\in D}\dot{v}_{j}^{-1}\left(  \lambda\left(  t-k\right)
\right)   & =d\left(  t-k\right)  ,\text{\qquad}\forall k\leq
T\label{OptDemand}%
\end{align}
where (\ref{OptDemand}) follows from (\ref{OptDemandInd}), and $\lambda\left(
t+1\right)  $ in (\ref{MinCost}) is the Lagrangian multiplier associated with
the balance constraint in optimization problem (\ref{Market P fixed})
solved\ at time $t+1,$ i.e., with $%
{\textstyle\sum\nolimits_{j\in D}}
\hat{d}_{j}=\hat{d}\left(  t+1\right)  .$

The prediction step (\ref{PredStep}) may be carried through by resorting to
linear auto-regressive models, in which case, we will have:%
\begin{equation}
\phi\left(  d\left(  t\right)  ,\cdots,d\left(  t-T\right)  \right)
=\sum\limits_{k=0}^{T}\alpha_{k}d\left(  t-k\right)  ,\text{\qquad}\alpha
_{k}\in\mathbb{R}{\normalsize .}\label{LARMAX}%
\end{equation}
When $\phi\left(  \cdot\right)  $ is of the form (\ref{LARMAX}), equations
(\ref{MinCost})$-$(\ref{OptDemand}) result in:%
\begin{equation}
\sum\limits_{i\in S}\dot{c}_{i}^{-1}\left(  \lambda\left(  t+1\right)
\right)  =\sum\limits_{k=0}^{T}\alpha_{k}\sum\limits_{j\in D}\dot{v}_{j}%
^{-1}\left(  \lambda\left(  t-k\right)  \right) \label{AX Price D}%
\end{equation}
\ Some of the popular forecasting models currently in use by the system
operators are variations of the \textit{persistence model} which corresponds
to the special case where the predicted demand for the next time step is
assumed to be equal to the demand at the previous time step, i.e.,
$\phi\left(  d\left(  t\right)  ,\cdots,d\left(  t-T\right)  \right)
=d\left(  t\right)  $. In this case, equations (\ref{MinCost})$-$%
(\ref{OptDemand}) result in:%
\begin{equation}
\sum\limits_{i\in S}\dot{c}_{i}^{-1}\left(  \lambda\left(  t+1\right)
\right)  =\sum\limits_{j\in D}\dot{v}_{j}^{-1}\left(  \lambda\left(  t\right)
\right)  .\label{Price D}%
\end{equation}
If all the producers can be aggregated into one representative producer agent
with a convex cost function $c\left(  \cdot\right)  ,$ and all the consumers
can be aggregated into one representative consumer agent with a concave value
function $v\left(  \cdot\right)  ,$ then (\ref{AX Price D}) and (\ref{Price D}%
) reduce, respectively, to :%
\begin{equation}
\lambda\left(  t+1\right)  =\dot{c}\left(  \sum\nolimits_{k=0}^{T}\alpha
_{k}\dot{v}^{-1}\left(  \lambda\left(  t-k\right)  \right)  \right)
\label{AX Price D Simple}%
\end{equation}
and%
\begin{equation}
\lambda\left(  t+1\right)  =\dot{c}\left(  \dot{v}^{-1}\left(  \lambda\left(
t\right)  \right)  \right)  .\label{Price D Simple}%
\end{equation}
\qquad More details on the construction of the representative agent mode can
be found in \cite{RoozbehaniCDC2010}.

\subsection{Price Dynamics under Ex-post Pricing}%

\begin{figure}
[ptb]
\begin{center}
\includegraphics[
height=1.9in,
width=2.2in
]%
{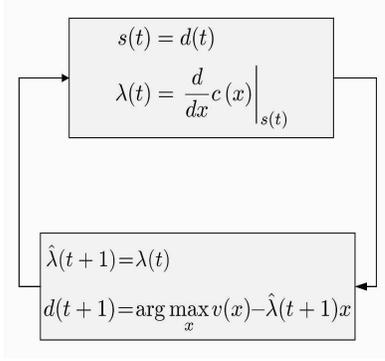}%
\caption{Ex-post Priced Supply/Demand Feedback}%
\label{Feedback 2}%
\end{center}
\end{figure}

Under ex-post pricing, the price charged for consumption of one unit of
electricity during the interval $\left[  t,t+1\right]  $ is declared at time
$t+1$, when the total consumption has materialized. In order to decide on the
amount to consume during each time interval ahead, a prediction of the ex-post
price is needed. We assume that $\hat{\lambda}_{j}\left(  t+1\right)  , $
consumer $j$'s predicted price, is a function\ of the ex-post prices of the
previous intervals. Therefore,%
\begin{align}
\hat{\lambda}_{j}\left(  t+1\right)   & =\phi_{j}\left(  \lambda\left(
t\right)  ,\cdots,\lambda\left(  t-T\right)  \right) \label{PricePredStep}%
\\[0.05in]
d\left(  t+1\right)   & =\sum\limits_{j\in D}\dot{v}_{j}^{-1}\left(
\hat{\lambda}^{j}\left(  t+1\right)  \right) \label{ExPOptDemand}\\
\sum\limits_{i\in S}\dot{c}_{i}^{-1}\left(  \lambda\left(  t+1\right)
\right)   & =d\left(  t+1\right) \label{ExPMinCost}%
\end{align}
By combining (\ref{PricePredStep})$-$(\ref{ExPMinCost}) we obtain:%
\begin{equation}
\sum\limits_{i\in S}\dot{c}_{i}^{-1}\left(  \lambda\left(  t+1\right)
\right)  =\sum\limits_{j\in D}\dot{v}_{j}^{-1}\left(  \phi_{j}\left(
\lambda\left(  t\right)  ,\cdots,\lambda\left(  t-T\right)  \right)  \right)
\label{ExPCmplxPriceD}%
\end{equation}
It is observed that when the consumers use the persistence model for
predicting future prices, i.e., when $\phi_{j}\left(  \lambda\left(  t\right)
,\cdots,\lambda\left(  t-T\right)  \right)  =\lambda\left(  t\right)  ,$
$\forall j,$ then the price dynamics (\ref{ExPCmplxPriceD}) becomes identical
to the case with exant\'{e} pricing (\ref{Price D}), with the difference that
the price uncertainty and the associated risks are bore by the consumer. In
general, however, the price dynamics would depend on how each individual
consumer predicts the ex-post price. This additional layer of dependency on
consumer behavior suggests that more complicated market outcomes with
multiple, possibly inefficient equilibria could materialize in ex-post-priced
retail markets.

\begin{remark}
Equation (\ref{ExPMinCost}) assumes that the generators were dispatched
optimally, which is ideal but unlikely in practice. In this paper, we do not
model the intricacies arising from the discrepancy between exant\'{e} dispatch
(which is the actual dispatch schedule based on prediction, and hence, not
necessarily optimal) and ex-post dispatch (which characterizes how the
generators should have been ideally dispatched). Although the settlement of
these discrepancies is important in practice, such details can be safely
ignored without affecting the core of our framework.
\end{remark}

\begin{remark}
It is also possible to consider dynamic models arising from exant\'{e} pricing
complemented with ex-post adjustments, see for instance \cite{MvichEEM11}.
\end{remark}

\subsection{Demand Dynamics under Exant\'{e} or Ex-post Pricing}

We could alternatively write dynamical system equations for the evolution\ of
demand. Under exant\'{e} pricing we will have:
\begin{align}
\dot{v}_{j}\left(  d_{j}\left(  t+1\right)  \right)   &  =\dot{c}_{i}\left(
s_{i}\left(  t\right)  \right)  \hspace{0.15in}\forall i\in S,~j\in
D\label{DDynExante1}\\[0.1in]
\sum\limits_{i\in S}s_{i}\left(  t\right)   &  =\phi\left(  \sum
\nolimits_{j\in D}d_{j}\left(  t\right)  ,\cdots,\sum\nolimits_{j\in D}%
d_{j}\left(  t-T\right)  \right)  ,\label{DDynExante2}%
\end{align}
whereas, under ex-post pricing we will have:%
\begin{align}
\dot{v}_{j}\left(  d_{j}\left(  t+1\right)  \right)   &  =\phi_{j}\left(
\dot{c}_{i}\left(  s_{i}\left(  t\right)  \right)  ,\cdots,\dot{c}_{i}\left(
s_{i}\left(  t-T\right)  \right)  \right) \label{DDynExpost1}\\[0.1in]
\sum\limits_{i\in S}s_{i}\left(  t\right)   &  =\sum\limits_{j\in D}%
d^{j}\left(  t\right)  .\label{DDynExpost2}%
\end{align}
Assuming representative agent models, (\ref{DDynExante1})$-$(\ref{DDynExante2}%
) and (\ref{DDynExpost1})$-$(\ref{DDynExpost2}) reduce, respectively, to%
\begin{equation}
\dot{v}\left(  d\left(  t+1\right)  \right)  =\dot{c}\left(  \phi\left(
d\left(  t\right)  ,\cdots,d\left(  t-T\right)  \right)  \right)
\label{TheEq3}%
\end{equation}
and
\begin{equation}
\dot{v}\left(  d\left(  t+1\right)  \right)  =\phi\left(  \dot{c}\left(
d\left(  t\right)  \right)  ,\cdots,\dot{c}\left(  d\left(  t-T\right)
\right)  \right)  .\label{TheEq4}%
\end{equation}
Under the persistence model for prediction we have:%
\begin{equation}
\dot{v}\left(  d\left(  t+1\right)  \right)  =\dot{c}\left(  d\left(
t\right)  \right) \label{Demand D}%
\end{equation}
In the sequel, we will develop a theoretical framework that is convenient for
analysis of dynamical systems described by\ implicit equations. Such systems
arise in many applications which incorporate real-time optimization in a
feedback loop, several instances of which were developed in this section. As
we will see, this framework is extremely useful for studying the dynamics of
electricity markets, robustness to disturbances, price stability, and
volatility under real-time pricing.

\section{Theoretical Framework\label{sec:main}}

\subsection{Stability Analysis}

In this section we present several stability criteria based on Lyapunov
techniques and examine stability properties of the clearing price dynamics
formulated in Section \ref{DSDM}.

\begin{theorem}
\label{Main}Let $\mathcal{S}$ be a discrete-time dynamical system described by
the state-space equation%
\begin{equation}%
\begin{array}
[c]{ccrll}%
\mathcal{S} & : & x\left(  t+1\right)  \hspace{-0.08in} & = & \hspace
{-0.08in}\psi\left(  x\left(  t\right)  \right)  \vspace*{0.1in}\\
&  & x_{0}\hspace{-0.08in} & \in & \hspace{-0.08in}X_{0}\subset\mathbb{R}_{+}%
\end{array}
\label{scalar DS}%
\end{equation}
for some function $\psi:\mathbb{R}_{+}\mapsto\mathbb{R}_{+}$. Then,
$\mathcal{S}$ is stable if there exists a pair of continuously differentiable
functions $f,g:\mathbb{R}_{+}\mapsto\mathbb{R}_{+}$ satisfying%
\begin{equation}
g\left(  x\left(  t+1\right)  \right)  =f\left(  x\left(  t\right)  \right)
\label{gp=f}%
\end{equation}
and%
\begin{align}
\hspace{-0.04in}\left(  \text{i}\right)   &  :\hspace{0.2in}\theta^{\ast}%
=\inf\left\{  \theta~|~\left\vert \dot{f}\left(  x\right)  \right\vert
\leq\theta\left\vert \dot{g}\left(  x\right)  \right\vert ,\hspace
{0.1in}\forall x\right\}  \leq1\label{fdlgd}\\[0.12in]
\hspace{-0.04in}\left(  \text{ii}\right)   &  :\hspace{0.2in}\mu
_{L}(\{x~|~\dot{f}\left(  x\right)  =\dot{g}\left(  x\right)
\})=0\label{fdlg4}%
\end{align}
and either:%
\begin{equation}
\hspace{0.17in}(\text{iii}):\hspace{0.16in}\dot{g}\left(  x\right)
\geq0,\text{ }\forall x,\text{ and }\lim\limits_{x\rightarrow\infty}\left\{
f\left(  x\right)  -g\left(  x\right)  \right\}  <0\label{fdlgd2}%
\end{equation}
or%
\begin{equation}
\hspace{0.17in}(\text{iii})^{\prime}\hspace{-0.02in}:\hspace{0.14in}\dot
{g}\left(  x\right)  \leq0,\text{ }\forall x,\text{ and }\lim
\limits_{x\rightarrow\infty}\left\{  f\left(  x\right)  -g\left(  x\right)
\right\}  >0\label{fdlgd3}%
\end{equation}
\vspace*{0.1in}
\end{theorem}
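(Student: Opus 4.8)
My plan is to construct a Lyapunov function for the system $\mathcal{S}$ directly from the auxiliary functions $f$ and $g$. The relation $g(x(t+1)) = f(x(t))$ means that the dynamics, although given implicitly through $\psi$, can be tracked by watching how $g$ evolves along trajectories: $g(x(t+1)) = f(x(t))$, and then $g(x(t+2)) = f(x(t+1))$, etc. So the natural candidate is to measure the "gap" between $f$ and $g$. Concretely, I would consider the function
\[
V(x) = \left| f(x) - g(x) \right|
\]
or, to get something that behaves monotonically, I would first argue that under hypotheses (i)--(iii), $g$ is monotone (say nondecreasing under (iii)), so it has a well-defined inverse on its range, and then work with $W(x) = \left| g(x) - x^{\ast}_g \right|$ where $x^\ast_g$ is chosen so that the equilibrium $\bar x$ satisfies $f(\bar x) = g(\bar x)$. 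The first step, then, is to establish that an equilibrium exists and is unique: condition (iii) says $\dot g \ge 0$ and $f - g \to -\infty$ (or at least negative in the limit); combined with boundary behavior at $0$ and the intermediate value theorem, this should pin down a unique $\bar x$ with $f(\bar x) = g(\bar x)$, i.e., $g(\bar x) = f(\bar x)$, which by $g(x(t+1)) = f(x(t))$ forces $x(t+1) = x(t) = \bar x$. Condition (ii), that $\dot f = \dot g$ only on a measure-zero set, is what will upgrade non-strict decrease of the Lyapunov function to the strict decrease needed for asymptotic (not merely Lyapunov) stability.

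**The key monotonicity estimate.** The heart of the argument is to show that along trajectories the quantity $\left| g(x(t)) - g(\bar x) \right|$ is non-increasing, and strictly decreasing unless already at equilibrium. Using $g(x(t+1)) = f(x(t))$ and $g(\bar x) = f(\bar x)$, I have
\[
g(x(t+1)) - g(\bar x) = f(x(t)) - f(\bar x) = \int_{\bar x}^{x(t)} \dot f(s)\, ds.
\]
Similarly $g(x(t)) - g(\bar x) = \int_{\bar x}^{x(t)} \dot g(s)\, ds$. Condition (i), $\left| \dot f(x) \right| \le \theta^\ast \left| \dot g(x) \right|$ with $\theta^\ast \le 1$, together with $\dot g \ge 0$ (from (iii)), gives
\[
\left| g(x(t+1)) - g(\bar x) \right| = \left| \int_{\bar x}^{x(t)} \dot f(s)\, ds \right| \le \int_{\bar x}^{x(t)} \left| \dot f(s) \right| ds \le \theta^\ast \int_{\bar x}^{x(t)} \dot g(s)\, ds = \theta^\ast \left| g(x(t)) - g(\bar x) \right|,
\]
where I must be a little careful with the orientation of the integral when $x(t) < \bar x$, but the absolute values make it work out in both cases. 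So $V_t := \left| g(x(t)) - g(\bar x) \right|$ satisfies $V_{t+1} \le \theta^\ast V_t \le V_t$. This already gives boundedness of $g(x(t))$, hence (via monotonicity/invertibility of $g$, using (iii)) boundedness of $x(t)$, and Lyapunov stability. For asymptotic stability I would split into two cases: if $\theta^\ast < 1$ then $V_t \to 0$ geometrically and we are done; if $\theta^\ast = 1$, then I invoke condition (ii) — the inequality $\left| \dot f \right| \le \left| \dot g \right|$ can fail to be strict only on a null set, so unless $x(t) = \bar x$, the integral inequality above is strict, giving $V_{t+1} < V_t$; a LaSalle-type argument (the $\omega$-limit set must be invariant and contained in $\{V_{t+1} = V_t\}$, which forces it to be $\{\bar x\}$) then finishes the job.

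**The case $(\text{iii})'$ and the main obstacle.** The case with $\dot g \le 0$ and $f - g \to +\infty$ should follow by an essentially symmetric argument — either by the substitution $g \mapsto -g$, $f \mapsto -f$ (which flips the sign of both derivatives and of the limit, reducing to case (iii)), or by redoing the integral estimate with the orientation reversed. I would state this reduction explicitly rather than repeat the computation. The main obstacle I anticipate is the careful handling of the case $\theta^\ast = 1$: establishing that $V_{t+1} = V_t$ forces $x(t)$ to be a point where $\dot f = \dot g$ almost everywhere on the interval between $x(t)$ and $\bar x$ — and then concluding from (ii) that such an interval must be degenerate, i.e., $x(t) = \bar x$. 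One has to rule out the possibility that the trajectory "hides" in a set where $\dot f = \dot g$; this is where I'd need the measure-zero hypothesis (ii) in an essential way, arguing that the set $\{x : \dot f(x) = \dot g(x)\}$ having Lebesgue measure zero means it contains no nondegenerate interval, so equality $\int_{\bar x}^{x(t)} (\left|\dot g\right| - \left|\dot f\right|) = 0$ with a nonnegative integrand forces $x(t) = \bar x$. A secondary technical point is justifying that $g$ restricted to the relevant region is injective (so that "$g(x(t)) \to g(\bar x)$" really implies "$x(t) \to \bar x$"), which again uses $\dot g \ge 0$ together with (ii) to exclude flat stretches of $g$.
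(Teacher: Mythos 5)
Your central estimate is sound, and it is a genuinely different route from the paper's: you conjugate the dynamics through $g$ and show that $V_t=\left\vert g\left(x(t)\right)-g\left(\bar{x}\right)\right\vert$ contracts by the factor $\theta^{\ast}$, whereas the paper works with $V(x)=\left\vert f(x)-g(x)\right\vert$, which by the update rule equals the one-step displacement $\left\vert g\left(\psi(x)\right)-g(x)\right\vert$ and therefore requires no a priori knowledge of the equilibrium. That difference is exactly where your proposal has a genuine gap: the existence of an interior equilibrium $\bar{x}\in\mathbb{R}_{+}$ with $f(\bar{x})=g(\bar{x})$ does \emph{not} follow from the stated hypotheses. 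Condition (iii) controls the sign of $f-g$ only at infinity; nothing forces $f-g$ to be nonnegative near $0$, so the intermediate value theorem has nothing to bite on. Take $g(x)=x$ and $f(x)=x/2$ on $\mathbb{R}_{+}$: conditions (i), (ii), (iii) all hold, the dynamics is $x(t+1)=x(t)/2$, and the only fixed point is the boundary point $0$, which is not in $\mathbb{R}_{+}$. Your Lyapunov function is then undefined. The paper's proof confronts this explicitly by splitting into ``either $\lim_{t\rightarrow\infty}x(t)=0$, or $\left\{x(t)\right\}$ has a subsequence converging to some $x^{\ast}\in\mathbb{R}_{+}$,'' and only in the second case does it manufacture the equilibrium, as the limit point at which $V$ vanishes. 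To repair your argument you would either need an additional hypothesis guaranteeing an interior equilibrium, or you would have to prepend the paper's compactness dichotomy and treat the boundary case separately.

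Two smaller points. First, in the $\theta^{\ast}=1$ case your LaSalle argument needs the trajectory to be precompact in $\mathbb{R}_{+}$. Boundedness of $g\left(x(t)\right)$ alone does not bound $x(t)$ when $g$ is bounded above (e.g.\ $g(x)=1-e^{-x}$); it is precisely the limit condition $\lim_{x\rightarrow\infty}\left\{f(x)-g(x)\right\}<0$ in (iii) --- not monotonicity or invertibility of $g$ --- that rules out escape to infinity, and this is how the paper uses that condition to obtain the upper bound on $\left\{x(t)\right\}$. You cite (iii) in passing here, but this step deserves the explicit argument. Second, your observations that (ii) forces $g$ to be strictly monotone (a flat stretch of $g$ would force $\dot{f}=0=\dot{g}$ on a set of positive measure) and that equality in the integral estimate forces $x(t)=\bar{x}$ are both correct and are the right way to deploy hypothesis (ii); the paper uses it to the same effect in the strict-inequality version of its Lemma on the integral comparison of $f$ and $g$.
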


Before we proceed with proving Theorem \ref{Main}, we present the following
lemma, which will be used several times in this paper.

\begin{lemma}
\label{MainLemma}Let $X$ be a subset of $\mathbb{R}$. Suppose that there
exists a continuously differentiable function $f:X\mapsto\mathbb{R},$ a
continuously differentiable monotonic function $g:X\mapsto\mathbb{R},$ and a
constant $\theta\in\lbrack0,\infty)$ satisfying%
\begin{equation}
\left\vert \dot{f}\left(  x\right)  \right\vert \leq\theta\left\vert \dot
{g}\left(  x\right)  \right\vert ,\text{\quad}\forall x\in X\label{LemmaCond}%
\end{equation}
Then
\begin{equation}
\hspace*{-0.25in}\left\vert f\left(  x\right)  -f\left(  y\right)  \right\vert
\leq\theta\left\vert g\left(  x\right)  -g\left(  y\right)  \right\vert
,\text{\quad}\forall x,y\hspace*{-0.03in}\in\hspace*{-0.03in}X\label{LemmaEq1}%
\end{equation}
Furthermore, if (\ref{fdlg4}) is satisfied, then
\begin{equation}
\hspace*{-0.25in}\left\vert f\left(  x\right)  -f\left(  y\right)  \right\vert
<\left\vert g\left(  x\right)  -g\left(  y\right)  \right\vert ,\text{\quad
}\forall x,y\hspace*{-0.03in}\in\hspace*{-0.03in}X,\text{ }x\neq
y\label{LemmaEq2}%
\end{equation}

\end{lemma}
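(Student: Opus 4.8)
The plan is to promote the pointwise derivative inequality (\ref{LemmaCond}) to the finite-difference inequality (\ref{LemmaEq1}) through the fundamental theorem of calculus, using monotonicity of $g$ for the single purpose of converting $\int|\dot g|$ back into $|g(\cdot)-g(\cdot)|$; the refinement (\ref{LemmaEq2}) is then a matter of locating where the resulting chain of inequalities can fail to be strict.

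For (\ref{LemmaEq1}): fix $x,y\in X$, say $x<y$, and work on the segment $[x,y]$, which lies in $X$ (an interval in all the applications, e.g.\ $X=\mathbb{R}_{+}$). Since $f,g$ are $\mathcal{C}^{1}$, the fundamental theorem of calculus and the triangle inequality for integrals give
\[
|f(x)-f(y)|=\left|\int_{x}^{y}\dot f(s)\,ds\right|\le\int_{x}^{y}|\dot f(s)|\,ds\le\theta\int_{x}^{y}|\dot g(s)|\,ds,
\]
the last step by (\ref{LemmaCond}) applied under the integral. Since $g$ is monotonic, $\dot g$ keeps one sign on $[x,y]$, whence $\int_{x}^{y}|\dot g(s)|\,ds=\left|\int_{x}^{y}\dot g(s)\,ds\right|=|g(x)-g(y)|$, and (\ref{LemmaEq1}) follows.

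For (\ref{LemmaEq2}) --- where the relevant case is $\theta\le1$, as in condition (i) of Theorem \ref{Main} --- I would first extract from (\ref{fdlg4}) that $\{x:\dot g(x)=0\}$ is null: if $\dot g(x)=0$ then $|\dot f(x)|\le\theta|\dot g(x)|=0$, so $\dot f(x)=\dot g(x)=0$, i.e.\ $\{\dot g=0\}\subseteq\{\dot f=\dot g\}$. Hence $|g(x)-g(y)|=\int_{x}^{y}|\dot g|>0$ whenever $x\neq y$, and when $\theta<1$ the argument closes immediately: $|f(x)-f(y)|\le\theta\int_{x}^{y}|\dot g|<\int_{x}^{y}|\dot g|=|g(x)-g(y)|$. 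For the boundary value $\theta=1$ I would argue by contradiction: equality in (\ref{LemmaEq2}) forces every inequality in the displayed chain to be an equality, so, using continuity of $\dot f$ and $\dot g$, the equality $\int_{x}^{y}|\dot f|=\int_{x}^{y}|\dot g|$ together with $|\dot f|\le|\dot g|$ yields $|\dot f|\equiv|\dot g|$ on $[x,y]$, while $|\int_{x}^{y}\dot f|=\int_{x}^{y}|\dot f|$ forces $\dot f$ to keep one sign on $[x,y]$; combined with the fixed sign of $\dot g$ this gives $\dot f\equiv\dot g$ or $\dot f\equiv-\dot g$ on $[x,y]$. The first is impossible since $\mu_{L}([x,y])=y-x>0$ would contradict (\ref{fdlg4}); the second does not occur in the settings in which the lemma is applied, where $\dot f$ and $\dot g$ carry a fixed relative sign (equivalently, (\ref{fdlg4}) is read as saying that $\{|\dot f|=|\dot g|\}$ is null). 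This rules out equality, so (\ref{LemmaEq2}) holds.

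The FTC manipulation and the sign bookkeeping for $g$ are routine; the delicate step is the strict inequality (\ref{LemmaEq2}). The hypothesis (\ref{fdlg4}) is only a measure-zero statement and has to be upgraded to a pointwise conclusion --- which is exactly where continuity of $\dot f$ and $\dot g$ is used --- and the sign relationship between $\dot f$ and $\dot g$ must be tracked carefully at the threshold $\theta=1$. I would also state explicitly the standing assumption $[x,y]\subseteq X$: it is automatic for $X=\mathbb{R}_{+}$ but genuinely needed, since on a disconnected domain the finite-difference bound can fail even when $\theta=0$.
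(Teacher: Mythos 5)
Your argument for (\ref{LemmaEq1}) is exactly the paper's: the fundamental theorem of calculus and the triangle inequality give $|f(x)-f(y)|\leq\left\vert\int_{y}^{x}|\dot f|\right\vert\leq\theta\left\vert\int_{y}^{x}|\dot g|\right\vert$, and monotonicity of $g$ converts the last integral into $\theta|g(x)-g(y)|$, so the approaches coincide. For (\ref{LemmaEq2}) the paper offers only the one-line remark that the inequality in the chain becomes strict under (\ref{fdlg4}); your more careful treatment is correct and usefully makes explicit what that remark suppresses, namely that one needs $\theta\leq1$, that the measure-zero hypothesis must be upgraded to a pointwise statement via continuity of $\dot f$ and $\dot g$, and that at $\theta=1$ the degenerate configuration $\dot f\equiv-\dot g$ must be excluded by reading (\ref{fdlg4}) as nullity of $\{|\dot f|=|\dot g|\}$ (or by the sign structure of the intended applications), together with the standing assumption that $[x,y]\subseteq X$.
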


\vspace*{0.1in}

\begin{proof}
We have\vspace*{-2pt}%
\begin{align}
\hspace*{-0.25in}\forall x,y\hspace*{-0.03in}  & \in\hspace*{-0.03in}X,\text{
}x\neq y:\nonumber\\[0.05in]
\hspace*{-0.25in}\left\vert f\left(  x\right)  -f\left(  y\right)  \right\vert
\hspace*{-0.03in}  & \leq\hspace*{-0.03in}\left\vert \int_{y}^{x}\left\vert
\dot{f}\left(  \tau\right)  \right\vert d\tau\right\vert \nonumber\\[0.05in]
\hspace*{-0.25in}\hspace*{-0.03in}  & \leq\hspace*{-0.03in}\theta\left\vert
\int_{y}^{x}\left\vert \dot{g}\left(  \tau\right)  \right\vert d\tau
\right\vert =\theta\left\vert g\left(  x\right)  -g\left(  y\right)
\right\vert \label{eq3}%
\end{align}
where the inequality in (\ref{eq3}) follows from (\ref{LemmaCond}) and the
subsequent equality follows from (\ref{fdlgd2}). Proof of (\ref{LemmaEq2}) is
similar, except that under the assumptions of the lemma, the non-strict
inequality in (\ref{eq3}) can be replaced with a strict inequality.
\end{proof}

\vspace*{0.1in}

We will now present the proof of Theorem \ref{Main}.

\begin{proof}
[of Theorem \ref{Main}]The key idea of the proof is that the function
\begin{equation}
V\left(  x\right)  =\left\vert f\left(  x\right)  -g\left(  x\right)
\right\vert \label{MyV}%
\end{equation}
is strictly monotonically decreasing along the trajectories of
(\ref{scalar DS}). From Lemma \ref{Main} we have:\vspace*{-2pt}%
\begin{align}
& V\left(  x\left(  t+1\right)  \right)  -V\left(  x\left(  t\right)  \right)
\nonumber\\
& =\left\vert f\left(  x\left(  t+1\right)  \right)  -g\left(  x\left(
t+1\right)  \right)  \right\vert -\left\vert f\left(  x\left(  t\right)
\right)  -g\left(  x\left(  t\right)  \right)  \right\vert \nonumber\\
& =\left\vert f\left(  x\left(  t+1\right)  \right)  -f\left(  x\left(
t\right)  \right)  \right\vert -\left\vert g\left(  x\left(  t+1\right)
\right)  -g\left(  x\left(  t\right)  \right)  \right\vert \nonumber\\
& <0.\label{LdanZ}%
\end{align}
Therefore, $\left\{  V\left(  x\left(  t\right)  \right)  \right\}  $ is a
strictly decreasing bounded sequence and converges to a limit $c\geq0$. We
show that $c>0$ is not possible. Note that the sequence $\left\{  x\left(
t\right)  \right\}  $ is bounded from below since the domain of $\psi$ is
$\mathbb{R}_{+}$. Furthermore, as long as $f\left(  x\left(  t\right)
\right)  <g\left(  x\left(  t\right)  \right)  ,$ the sequence $\left\{
g\left(  x\left(  t\right)  \right)  \right\}  $ decreases strictly.
Therefore, (\ref{fdlgd2}) implies that\vspace*{-2pt}%
\begin{equation}
\forall x_{0}:\exists~M\in\mathbb{R},~N\in\mathbb{Z}_{+}:g\left(  x\left(
t\right)  \right)  \leq M,\text{ }\forall t\geq N.\label{BonG}%
\end{equation}
It follows from (\ref{BonG}), monotonicity and continuity of $g\left(
\cdot\right)  $\ that the sequence $\left\{  x\left(  t\right)  \right\}  $ is
bounded from above too (similar arguments prove boundedness of $\left\{
x\left(  t\right)  \right\}  $ when (\ref{fdlgd3}) holds). Hence, either
$\lim_{t\rightarrow\infty}x\left(  t\right)  =0$,\ or $\left\{  x\left(
t\right)  \right\}  $ has a subsequence $\left\{  x\left(  t_{i}\right)
\right\}  $ which converges to a limit $x^{\ast}\in\mathbb{R}_{+}.$ In the
latter case we have\vspace*{-2pt}%
\begin{align*}
\lim_{t\rightarrow\infty}V\left(  x\left(  t\right)  \right)  \hspace
*{-0.09in}  & =\hspace*{-0.09in}\lim_{i\rightarrow\infty}V\left(  x\left(
t_{i}\right)  \right)  \hspace*{-0.02in}=\hspace*{-0.02in}\left\vert
\lim_{i\rightarrow\infty}\left\{  f\left(  x\left(  t_{i}\right)  \right)
-g\left(  x\left(  t_{i}\right)  \right)  \right\}  \right\vert \\[0.08in]
& =\left\vert f\left(  x^{\ast}\right)  -g\left(  x^{\ast}\right)
\right\vert
\end{align*}
\vspace*{10pt}If $g\left(  x^{\ast}\right)  =g\left(  \psi\left(  x^{\ast
}\right)  \right)  $ then $c=\left\vert f\left(  x^{\ast}\right)  -g\left(
\psi\left(  x^{\ast}\right)  \right)  \right\vert =0$ (due to (\ref{gp=f}))$.$
If $g\left(  x^{\ast}\right)  \neq g\left(  \psi\left(  x^{\ast}\right)
\right)  $ then
\[
\exists\delta,\varepsilon>0,\text{ s.t. }\left\vert g\left(  \psi\left(
x\right)  \right)  -g\left(  x\right)  \right\vert \geq\varepsilon,\text{
}\forall x\in\mathcal{B}\left(  x^{\ast},\delta\right)
\]
Define a function $\tau:\mathcal{B}\left(  x^{\ast},\delta\right)
\mapsto\overline{\mathbb{R}}_{+}$ according to%
\[
\tau:x\mapsto\frac{\left\vert f\left(  \psi\left(  x\right)  \right)
-f\left(  x\right)  \right\vert }{\left\vert g\left(  \psi\left(  x\right)
\right)  -g\left(  x\right)  \right\vert }%
\]
Then it follows from \ref{LdanZ} that $\tau\left(  x\right)  <1$ for all
$x\in\mathcal{B}\left(  x^{\ast},\delta\right)  $. Furthermore, the function
is continuous over the compact set $\mathcal{B}\left(  x^{\ast},\delta\right)
$ and achieves its supremum $\overline{\tau},$ where $\overline{\tau}<1$.
Since $x\left(  t_{i}\right)  $ converges to $x^{\ast},$ there exists
$\tilde{t}\in\mathbb{N},$ such that $x\left(  t\right)  \in\mathcal{B}\left(
x^{\ast},\delta\right)  .$ Then%
\begin{align*}
& V\left(  x\left(  t+1\right)  \right)  -\overline{\tau}V\left(  x\left(
t\right)  \right) \\[0.08in]
& =\left\vert f\left(  x\left(  t+1\right)  \right)  -f\left(  x\left(
t\right)  \right)  \right\vert -\overline{\tau}\left\vert g\left(  x\left(
t+1\right)  \right)  -g\left(  x\left(  t\right)  \right)  \right\vert
\\[0.08in]
& \leq0,\text{\quad}\forall t\geq\tilde{t}%
\end{align*}
Since $\overline{\tau}<1,$ this proves that $c=0.$ Finally,
\[
\lim_{t\rightarrow\infty}f\left(  x\left(  t\right)  \right)  =\lim
_{t\rightarrow\infty}g\left(  x\left(  t\right)  \right)  =g\left(  x^{\ast
}\right)
\]%
\[
x^{\ast}=g^{-1}(\lim_{t\rightarrow\infty}f\left(  x\left(  t\right)  \right)
)=\lim_{t\rightarrow\infty}g^{-1}\circ f\left(  x\left(  t\right)  \right)
=\lim_{t\rightarrow\infty}x\left(  t\right)
\]
This completes the proof of convergence for all initial conditions. Proof of
Lyapunov stability is based on standard arguments in proving stability of
nonlinear systems (see, e.g., \cite{Khalil}), while using the same Lyapunov
function defined in (\ref{MyV}).
\end{proof}

\begin{remark}
\label{Rem1}The monotonicity conditions in (\ref{fdlgd2}) or (\ref{fdlgd3}) in
Theorem \ref{Main}\ can be relaxed at the expense of more involved
technicalities in both the statement of the theorem and its proof. As we will
see, these assumptions are naturally satisfied in\ applications of interest to
this paper. Therefore, we will not bother with the technicalities of removing
the condition.
\end{remark}

There are situations in which a natural decomposition of discrete-time
dynamical systems\ via functions $f$ and $g$ satisfying (\ref{gp=f})\ is
readily available. This is often the case for applications that involve
optimization in a feedback loop, many instances of which appeared in section
\ref{DSDM}. For instance, for the price dynamics (\ref{Price D Simple}), we
have $\psi=\dot{c}\circ\dot{v}^{-1},$ and the decomposition is obtained with
$g=\dot{c}^{-1},$ and $f=\dot{v}^{-1}$. However, $f$ and $g$ obtained in this
way may not readily satisfy (\ref{fdlgd}). We present the following
corollaries.\medskip

\begin{corollary}
\label{CoCo}Consider the system (\ref{scalar DS}) and suppose that
continuously differentiable functions $f,g:\mathbb{R}_{+}\mapsto\mathbb{R}%
_{+}$ satisfying (\ref{gp=f}) and (\ref{fdlg4})$-$(\ref{fdlgd3}) are given.
Then, the system is stable if there exist $\theta\leq1$ and a strictly
monotonic, continuously differentiable\ function $\rho:\mathbb{R}_{+}%
\mapsto\mathbb{R}$ satisfying
\[
\left\vert \dot{\rho}\left(  f\left(  x\right)  \right)  \dot{f}\left(
x\right)  \right\vert \leq\theta\left\vert \dot{\rho}\left(  g\left(
x\right)  \right)  \dot{g}\left(  x\right)  \right\vert
\]
for all $x\in\mathbb{R}_{+}.$\vspace*{0.05in}
\end{corollary}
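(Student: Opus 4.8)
The plan is to reduce the claim directly to Theorem \ref{Main} by composing with $\rho$. Define $\tilde{f} = \rho \circ f$ and $\tilde{g} = \rho \circ g$. First I would verify the decomposition identity: applying $\rho$ to both sides of (\ref{gp=f}) gives $\rho\left(g\left(x\left(t+1\right)\right)\right) = \rho\left(f\left(x\left(t\right)\right)\right)$, i.e. $\tilde{g}\left(x\left(t+1\right)\right) = \tilde{f}\left(x\left(t\right)\right)$, so $\tilde{f},\tilde{g}$ satisfy (\ref{gp=f}). Both are continuously differentiable as compositions of $\mathcal{C}^1$ maps, with $\dot{\tilde f}\left(x\right) = \dot\rho\left(f\left(x\right)\right)\dot f\left(x\right)$ and $\dot{\tilde g}\left(x\right) = \dot\rho\left(g\left(x\right)\right)\dot g\left(x\right)$ by the chain rule. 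The hypothesis of the corollary is then precisely condition (i), i.e. (\ref{fdlgd}) with $\theta^{\ast}\leq\theta\leq1$, for the pair $\tilde f,\tilde g$.

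Next I would check conditions (ii) and (iii)/(iii)$'$ for $\tilde f,\tilde g$. For (ii): since $\rho$ is strictly monotonic and $\mathcal{C}^1$, $\dot\rho$ is nonzero except possibly on a measure-zero set; more carefully, $\dot{\tilde f}\left(x\right)=\dot{\tilde g}\left(x\right)$ forces either $\dot f\left(x\right)=\dot g\left(x\right)$ (after dividing by the common factor $\dot\rho$, valid where it is nonzero) or $\dot\rho$ vanishes at both $f\left(x\right)$ and $g\left(x\right)$. A clean way to handle this is to note that strict monotonicity of $\rho$ can be strengthened to $\dot\rho\neq0$ almost everywhere, and the set where $\dot{\tilde f}=\dot{\tilde g}$ is then contained in $\{x : \dot f\left(x\right)=\dot g\left(x\right)\}$ up to a null set, so (\ref{fdlg4}) for $f,g$ implies it for $\tilde f,\tilde g$. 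For (iii)/(iii)$'$: $\dot{\tilde g}=\dot\rho\left(g\right)\dot g$, and since $\rho$ is strictly monotonic, $\dot\rho$ has a constant sign, so $\dot{\tilde g}$ has the sign of $\dot g$ (if $\rho$ increasing) or its negation (if $\rho$ decreasing); in either case $\dot{\tilde g}$ is single-signed. For the limit condition, $\tilde f\left(x\right)-\tilde g\left(x\right)=\rho\left(f\left(x\right)\right)-\rho\left(g\left(x\right)\right)$, and by the mean value theorem this equals $\dot\rho\left(\xi_x\right)\left(f\left(x\right)-g\left(x\right)\right)$ for some $\xi_x$ between $f\left(x\right)$ and $g\left(x\right)$; combined with the sign of $\dot\rho$ and the limit hypothesis on $f-g$ from (\ref{fdlgd2})/(\ref{fdlgd3}), this yields the corresponding limit condition for $\tilde f-\tilde g$ with the matching sign.

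With all hypotheses of Theorem \ref{Main} verified for the pair $\tilde f,\tilde g$ and the same system (\ref{scalar DS}), the theorem gives stability of $\mathcal{S}$, completing the proof. The main obstacle I anticipate is the edge case in condition (ii) where $\dot\rho$ vanishes: a naive chain-rule argument only shows the zero set of $\dot{\tilde f}-\dot{\tilde g}$ is contained in $\{\dot f=\dot g\}\cup\{\dot\rho\circ f=0\}\cup\{\dot\rho\circ g=0\}$, and one must argue the latter two sets are Lebesgue-null — this needs either an added mild regularity assumption on $\rho$ (e.g. $\dot\rho\neq0$ except at isolated points, which holds for all $\rho$ of interest such as $\log$) or a careful measure-theoretic argument using monotonicity. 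I would handle it by invoking that a strictly monotonic $\mathcal{C}^1$ function has $\dot\rho>0$ (resp. $<0$) on a dense open set whose complement is null, which suffices; a second minor point is ensuring the limit argument via the mean value theorem is uniform enough, but since only the sign of the limit matters this is routine.
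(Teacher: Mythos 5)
Your proof is correct and is essentially the paper's own argument: the paper likewise observes that $\rho\circ f$ and $\rho\circ g$ inherit (\ref{gp=f}) and satisfy (\ref{fdlgd})--(\ref{fdlgd3}), and then invokes Theorem \ref{Main}. You are in fact more careful than the paper about condition (\ref{fdlg4}) in the edge case where $\dot{\rho}$ vanishes; for the scalings actually used later ($\log$ and power functions) $\dot{\rho}$ is nowhere zero, so that case does not arise.
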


\begin{proof}
If $f$ and $g$ satisfy (\ref{gp=f}), then so do $\rho\circ f$ and $\rho\circ
g$ for any $\rho\in\mathcal{C}^{1}(0,\infty).$ Furthermore, under the
assumptions of the corollary, $\rho\circ f$ and $\rho\circ g$ satisfy
(\ref{fdlgd})$-$(\ref{fdlgd3}). The result then follows from Theorem
\ref{Main}.
\end{proof}

\begin{corollary}
\label{MarketSCol-I}\textbf{Market Stability I:} The system
(\ref{Price D Simple}) is stable if there exists a strictly monotonic,
continuously differentiable\ function $\rho:\mathbb{R}_{+}\mapsto\mathbb{R}$
satisfying%
\begin{equation}
\left\vert \dot{\rho}\left(  \dot{v}^{-1}\left(  \lambda\right)  \right)
\frac{\partial\dot{v}^{-1}\left(  \lambda\right)  }{\partial\lambda
}\right\vert \leq\theta\left\vert \dot{\rho}\left(  \dot{c}^{-1}\left(
\lambda\right)  \right)  \frac{\partial\dot{c}^{-1}\left(  \lambda\right)
}{\partial\lambda}\right\vert \label{rsslrvv}%
\end{equation}
for all $\lambda\in\mathbb{R}_{+}.\medskip$\newline Similarly, the system
(\ref{Demand D}) is stable if
\begin{equation}
\left\vert \dot{\rho}\left(  \dot{c}\left(  x\right)  \right)  \ddot{c}\left(
x\right)  \right\vert \leq\theta\left\vert \dot{\rho}\left(  \dot{v}\left(
x\right)  \right)  \ddot{v}\left(  x\right)  \right\vert \label{simple}%
\end{equation}
for all $x\in\mathbb{R}_{+}.$
\end{corollary}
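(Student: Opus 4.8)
The plan is to obtain both assertions as immediate consequences of Corollary~\ref{CoCo}, applied to the decompositions of the dynamics that are already implicit in the way the recursions were assembled in Section~\ref{DSDM}. Consider first the price recursion (\ref{Price D Simple}), for which $\psi=\dot{c}\circ\dot{v}^{-1}$. Set
\[
f=\dot{v}^{-1},\qquad g=\dot{c}^{-1}.
\]
Then $g\left(\lambda\left(t+1\right)\right)=\dot{c}^{-1}\!\left(\dot{c}\!\left(\dot{v}^{-1}\!\left(\lambda\left(t\right)\right)\right)\right)=\dot{v}^{-1}\!\left(\lambda\left(t\right)\right)=f\left(\lambda\left(t\right)\right)$, so (\ref{gp=f}) holds identically (and both $f$ and $g$ take values in $\mathbb{R}_{+}$, being optimal quantities). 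The remaining work is to check that this pair $f,g$ satisfies the structural hypotheses (\ref{fdlg4})$-$(\ref{fdlgd3}) demanded by Corollary~\ref{CoCo}, after which the scaled inequality required of $f,g$ and a scaling $\rho$ in that corollary is, verbatim, inequality (\ref{rsslrvv}).

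To verify the structural hypotheses I would invoke Assumption~I. Strict monotonicity and strict convexity of $c$ make $\dot{c}$ a strictly increasing $\mathcal{C}^{1}$ bijection of the relevant price range, so $g=\dot{c}^{-1}$ has $\dot{g}>0$ throughout; in particular $\dot{g}\geq0$, the sign condition in (\ref{fdlgd2}). Strict monotonicity and strict concavity of $v$ make $\dot{v}$ strictly decreasing, so $f=\dot{v}^{-1}$ has $\dot{f}<0$ throughout. Hence $\dot{f}\left(x\right)$ and $\dot{g}\left(x\right)$ have opposite signs for every $x$, the set $\{x\,|\,\dot{f}\left(x\right)=\dot{g}\left(x\right)\}$ is empty, and (\ref{fdlg4}) holds trivially. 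The only condition carrying real content is the limit in (\ref{fdlgd2}): $f\left(\lambda\right)=\dot{v}^{-1}\!\left(\lambda\right)$ is nonincreasing and bounded below (by $0$, using the convention $\dot{v}\left(0\right)=\infty$ or the domain restriction $\lambda\in[\dot{c}\left(0\right),\dot{v}\left(0\right)]$), while $g\left(\lambda\right)=\dot{c}^{-1}\!\left(\lambda\right)\rightarrow\infty$ since $\dot{c}$ is increasing; therefore $f\left(\lambda\right)-g\left(\lambda\right)\rightarrow-\infty<0$. With (\ref{gp=f}) and (\ref{fdlg4})$-$(\ref{fdlgd2}) in place, Corollary~\ref{CoCo} gives stability of (\ref{Price D Simple}) whenever there exist $\theta\leq1$ and a strictly monotonic $\rho\in\mathcal{C}^{1}$ with $\left|\dot{\rho}\!\left(f\left(\lambda\right)\right)\dot{f}\left(\lambda\right)\right|\leq\theta\left|\dot{\rho}\!\left(g\left(\lambda\right)\right)\dot{g}\left(\lambda\right)\right|$; since $\dot{f}\left(\lambda\right)=\partial\dot{v}^{-1}\!\left(\lambda\right)/\partial\lambda$ and $\dot{g}\left(\lambda\right)=\partial\dot{c}^{-1}\!\left(\lambda\right)/\partial\lambda$, this is (\ref{rsslrvv}).

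The second assertion is the mirror image. For the demand recursion (\ref{Demand D}) we have $\psi=\dot{v}^{-1}\circ\dot{c}$; take $f=\dot{c}$ and $g=\dot{v}$, so that $g\left(d\left(t+1\right)\right)=\dot{v}\!\left(d\left(t+1\right)\right)=\dot{c}\!\left(d\left(t\right)\right)=f\left(d\left(t\right)\right)$ verifies (\ref{gp=f}). Now $\dot{g}=\ddot{v}<0$ by strict concavity and $\dot{f}=\ddot{c}>0$ by strict convexity, so once again the two derivatives have opposite signs, (\ref{fdlg4}) is vacuous, and we fall under (\ref{fdlgd3}): $\dot{g}\leq0$, and $f\left(x\right)-g\left(x\right)=\dot{c}\left(x\right)-\dot{v}\left(x\right)\rightarrow+\infty$ since $\dot{c}$ increases and $\dot{v}$ decreases. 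Corollary~\ref{CoCo} then yields stability of (\ref{Demand D}) provided some strictly monotonic $\rho$ and some $\theta\leq1$ satisfy $\left|\dot{\rho}\!\left(\dot{c}\left(x\right)\right)\ddot{c}\left(x\right)\right|\leq\theta\left|\dot{\rho}\!\left(\dot{v}\left(x\right)\right)\ddot{v}\left(x\right)\right|$, which is exactly (\ref{simple}).

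The step I expect to be the only real obstacle is the behavior of $f-g$ at infinity in (\ref{fdlgd2})/(\ref{fdlgd3}): strictly speaking one needs $\dot{c}^{-1}$ (respectively the gap $\dot{c}-\dot{v}$) to be unbounded, or else one must restrict the state to the economically meaningful price (respectively quantity) interval on which a clearing equilibrium exists. This is precisely the kind of mild boundary hypothesis already flagged in Remark~\ref{Rem1}, and under the standing domain conventions of Section~\ref{MarketSec} it is automatic. A second, purely technical caveat is that in invoking Corollary~\ref{CoCo} one wants $\rho$ strictly monotonic so that $\dot{\rho}$ keeps a fixed sign and the opposite-sign structure of $\dot{f},\dot{g}$ is not destroyed upon passing to $\rho\circ f,\rho\circ g$ (which is what keeps (\ref{fdlg4}) valid after the reparametrization); for the scalings of interest here, namely the identity and $\rho=\log$, this is immediate.
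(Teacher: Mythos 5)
Your proof is correct and takes essentially the same route as the paper: the paper's own proof is a one-line invocation of Corollary~\ref{CoCo} with exactly your choices $f=\dot{v}^{-1}$, $g=\dot{c}^{-1}$ for (\ref{rsslrvv}) and $f=\dot{c}$, $g=\dot{v}$ for (\ref{simple}), together with the remark that Assumption~I supplies the remaining hypotheses. Your explicit verification of (\ref{fdlg4})--(\ref{fdlgd3}) via the opposite signs of $\dot f$ and $\dot g$, and your boundary caveat about the limit condition, simply spell out what the paper leaves implicit.
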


\begin{proof}
The statements follow from Corollary \ref{CoCo} with $f=\dot{v}^{-1}$ and
$g=\dot{c}^{-1}$ for (\ref{rsslrvv}), and $f=\dot{c}$ and $g=\dot{v}$ for
(\ref{simple}), and the fact that under Assumption I, all of the conditions
required in Corollary \ref{CoCo} are satisfied.
\end{proof}

\bigskip

\begin{remark}
By taking $\rho\left(  \cdot\right)  $ to be the identity function in
(\ref{rsslrvv}) and (\ref{simple}), we obtain the following sufficient
criteria for stability of (\ref{Price D Simple}) or (\ref{Demand D}):%
\begin{equation}
\left\vert \ddot{c}\left(  x\right)  \right\vert \leq\theta\left\vert \ddot
{v}\left(  x\right)  \right\vert \label{cddotTvddot}%
\end{equation}
or%
\begin{equation}
\left\vert \frac{\partial\dot{v}^{-1}\left(  \lambda\right)  }{\partial
\lambda}\right\vert \leq\theta\left\vert \frac{\partial\dot{c}^{-1}\left(
\lambda\right)  }{\partial\lambda}\right\vert \label{vdotidotLTcdotidot}%
\end{equation}
Although these conditions are very simple, they are generally harder to
satisfy globally for typical supply and demand functions.
\end{remark}

In the economics literature, \textit{elasticity} is defined\ as a measure of
how one variable responds to a change in another variable. In particular,
\textit{price-elasticity of demand }is defined as the percentage change in the
quantity demanded, resulting from one percentage change in the price, and is
used as a measure of responsiveness, or sensitivity of demand to variations in
the price. \textit{Price-elasticity of supply} is defined analogously. In this
paper, we generalize the standard definitions of elasticity as follows.

\begin{definition}
\label{def:GenEl}\textbf{Generalized} \textbf{Elasticity:} The quantity%
\[
\epsilon_{D}^{\text{p}}\left(  \lambda,l\right)  =\left(  \frac{\lambda}%
{\dot{v}^{-1}\left(  \lambda\right)  }\right)  ^{l}\frac{\partial\dot{v}%
^{-1}\left(  \lambda\right)  }{\partial\lambda},\text{\quad}l\geq0
\]
is the generalized price-elasticity of demand at price $\lambda$. Similarly,%
\[
\epsilon_{S}^{\text{p}}\left(  \lambda,l\right)  =\left(  \frac{\lambda}%
{\dot{c}^{-1}\left(  \lambda\right)  }\right)  ^{l}\frac{\partial\dot{c}%
^{-1}\left(  \lambda\right)  }{\partial\lambda},\text{\quad}l\geq0
\]
is the generalized price-elasticity of supply at price $\lambda$. Note that
these notions depend on the exponent $l.$ For $l=1,$ we obtain the standard
notions of elasticity. We define the \textit{market's relative generalized
price-elasticity} as the ratio of the generalized price-elasticities:
\begin{equation}
\epsilon_{\text{rel}}^{\text{p}}\left(  \lambda,l\right)  =\frac{\epsilon
_{D}^{\text{p}}\left(  \lambda,l\right)  }{\epsilon_{S}^{\text{p}}\left(
\lambda,l\right)  }.\label{relPel}%
\end{equation}
The \textit{market's maximal relative price-elasticity} (MRPE) is defined as
\begin{equation}
\theta^{\ast}\left(  l\right)  =\sup_{\lambda\in\mathbb{R}_{+}}\left\vert
\epsilon_{\text{rel}}^{\text{p}}\left(  \lambda,l\right)  \right\vert
.\label{maxrelPel}%
\end{equation}
The notions of generalized demand-elasticity of price and generalized
supply-elasticity of price are defined analogously:%
\[
\epsilon_{p}^{\text{d}}\left(  x,l\right)  =x^{l}\frac{\ddot{v}\left(
x\right)  }{\dot{v}\left(  x\right)  ^{l}},\text{\quad}\epsilon_{p}^{\text{s}%
}\left(  x\right)  =x^{l}\frac{\ddot{c}\left(  x\right)  }{\dot{c}\left(
x\right)  ^{l}}%
\]
When $l=1,$ these notions coincide with the Arrow-Pratt coefficient of Risk
Aversion (RA) \cite{Arrow, Pratt}, and we will adopt the same jargon in this
paper. The market's \textit{relative generalized risk aversion factor} is
defined as:%
\[
\epsilon_{\text{rel}}^{\text{{\small RA}}}\left(  x,l\right)  =\frac
{\epsilon_{p}^{\text{s}}\left(  x,l\right)  }{\epsilon_{p}^{\text{d}}\left(
x,l\right)  }=\frac{\ddot{c}\left(  x\right)  }{\ddot{v}\left(  x\right)
}\left(  \frac{\dot{v}\left(  x\right)  }{\dot{c}\left(  x\right)  }\right)
^{l}%
\]
Finally, the \textit{market's maximal relative risk-aversion} (MRRA)\ is
defined as
\begin{equation}
\eta^{\ast}\left(  l\right)  =\sup\limits_{x\in\mathbb{R}_{+}}\,\left\vert
\epsilon_{\text{rel}}^{\text{{\small RA}}}\left(  x,l\right)  \right\vert
.\label{maxrelRA}%
\end{equation}
With a slight abuse of notation, when $l=1,$ we write $\epsilon_{D}^{\text{p}%
}\left(  \lambda\right)  $ instead of $\epsilon_{D}^{\text{p}}\left(
\lambda,1\right)  ,$ and $\theta^{\ast}$ instead of $\theta^{\ast}\left(
1\right)  $, etc.
\end{definition}

The following corollary relates the market's stability to the market's
relative price-elasticity $\epsilon_{\text{rel}}^{\text{p}}\left(
\lambda,l\right)  $, and relative risk-aversion $\epsilon_{\text{rel}%
}^{\text{{\small RA}}}\left(  x,l\right)  .$

\begin{corollary}
\label{MarketSCol-II}\textbf{Market Stability II:} The system (\ref{Price D})
is stable if the market's MRPE is less than one for some $l\geq0$, that is:%
\begin{equation}%
\begin{array}
[c]{c}%
\exists l\geq0:\quad\theta^{\ast}\left(  l\right)  =\sup\limits_{\lambda
}\,\left\vert \epsilon_{\text{rel}}^{\text{p}}\left(  \lambda,l\right)
\right\vert <1\medskip
\end{array}
\label{maxrelPelLd1}%
\end{equation}
The system (\ref{Demand D}) is stable if the market's MRRA is less than one
for some $k\geq0$, that is:%
\begin{equation}%
\begin{array}
[c]{c}%
\exists l\geq0:\quad\eta^{\ast}\left(  l\right)  =\sup\limits_{x}\,\left\vert
\epsilon_{\text{rel}}^{\text{{\small RA}}}\left(  x,l\right)  \right\vert
<1\medskip
\end{array}
\label{maxrelElLd1}%
\end{equation}

\end{corollary}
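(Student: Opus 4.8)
The plan is to deduce the corollary from Corollary~\ref{CoCo} --- equivalently, for the representative-agent price dynamics~(\ref{Price D Simple}), from Corollary~\ref{MarketSCol-I} --- by exhibiting, for the given exponent $l\ge 0$, a single explicit scaling function $\rho$ that turns the elasticity hypothesis into exactly the scaled contraction inequality those results require. For the price dynamics~(\ref{Price D}) I would invoke Corollary~\ref{CoCo} with $f$ the aggregate demand function $\lambda\mapsto\sum_{j\in D}\dot v_j^{-1}(\lambda)$ and $g$ the aggregate supply function $\lambda\mapsto\sum_{i\in S}\dot c_i^{-1}(\lambda)$, so that~(\ref{gp=f}) holds by construction, the sign/limit requirements~(\ref{fdlg4})--(\ref{fdlgd3}) hold under Assumption~I exactly as checked in the proof of Corollary~\ref{MarketSCol-I} (finite sums of strictly monotone $\mathcal{C}^2$ inverse-marginal functions inherit the same properties), and the generalized elasticities of Definition~\ref{def:GenEl} are read with these aggregates in place of $\dot v^{-1},\dot c^{-1}$. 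For the demand dynamics~(\ref{Demand D}) I would instead take $f=\dot c$, $g=\dot v$. In either case the scaling function is
\[
\rho_l(y)=\begin{cases}\dfrac{y^{1-l}}{1-l}, & l\neq 1,\\ \ln y, & l=1,\end{cases}
\]
for which $\dot\rho_l(y)=y^{-l}$ on $\mathbb{R}_+$; thus $\rho_l$ is continuously differentiable, strictly increasing, and maps $\mathbb{R}_+$ into $\mathbb{R}$, hence admissible in Corollary~\ref{CoCo}.

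The remaining step is a one-line substitution and cancellation. For~(\ref{Price D}), straight from Definition~\ref{def:GenEl},
\[
\bigl|\dot\rho_l(f(\lambda))\,\dot f(\lambda)\bigr|=\bigl(\dot v^{-1}(\lambda)\bigr)^{-l}\Bigl|\frac{\partial\dot v^{-1}(\lambda)}{\partial\lambda}\Bigr|=\lambda^{-l}\,\bigl|\epsilon_D^{\mathrm{p}}(\lambda,l)\bigr|,
\]
and likewise $\bigl|\dot\rho_l(g(\lambda))\,\dot g(\lambda)\bigr|=\lambda^{-l}\,\bigl|\epsilon_S^{\mathrm{p}}(\lambda,l)\bigr|$. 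The common factor $\lambda^{-l}$ cancels, so the hypothesis of Corollary~\ref{CoCo} with this $\rho_l$ is exactly $\bigl|\epsilon_D^{\mathrm{p}}(\lambda,l)\bigr|\le\theta\,\bigl|\epsilon_S^{\mathrm{p}}(\lambda,l)\bigr|$ for all $\lambda$, i.e.\ $\bigl|\epsilon_{\mathrm{rel}}^{\mathrm{p}}(\lambda,l)\bigr|\le\theta$; taking the supremum over $\lambda$ and using hypothesis~(\ref{maxrelPelLd1}) lets us take $\theta=\theta^{\ast}(l)<1$, and Corollary~\ref{CoCo} then gives stability of~(\ref{Price D}). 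The demand case is identical with $\dot c,\dot v$ in the roles of $\dot v^{-1},\dot c^{-1}$: $\bigl|\dot\rho_l(\dot c(x))\ddot c(x)\bigr|=\dot c(x)^{-l}|\ddot c(x)|=x^{-l}|\epsilon_p^{\mathrm{s}}(x,l)|$ and similarly for $v$, so the condition collapses to $\bigl|\epsilon_{\mathrm{rel}}^{\mathrm{RA}}(x,l)\bigr|\le\theta$, and~(\ref{maxrelElLd1}) supplies $\theta=\eta^{\ast}(l)<1$.

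No genuinely hard analytic step arises: all the dynamical content is carried by Theorem~\ref{Main}/Corollary~\ref{CoCo}, which are available. The points requiring a little care are (a) admissibility of $\rho_l$ on all of $\mathbb{R}_+$ --- immediate since $\dot\rho_l(y)=y^{-l}>0$ --- together with the companion observation that this is also why positivity of $\dot v_j,\dot c_i$ (hence of the values $\dot v^{-1}(\lambda),\dot c^{-1}(\lambda),\dot v(x),\dot c(x)$), i.e.\ Assumption~I, is what makes $\dot\rho_l(f(\lambda))$ and $\dot\rho_l(g(\lambda))$ well defined; (b) re-using, rather than re-proving, the verification from Corollary~\ref{MarketSCol-I} that $f,g$ meet~(\ref{fdlg4})--(\ref{fdlgd3}) under Assumption~I; and (c) the bookkeeping of which of $f,g$ is supply and which is demand, so that the scaled ratio comes out as $\epsilon_{\mathrm{rel}}^{\mathrm{p}}$ (resp.\ $\epsilon_{\mathrm{rel}}^{\mathrm{RA}}$) and not its reciprocal --- this is also why the two parts of the corollary write the ratios in opposite orientations. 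As a sanity check, $l=0$ gives $\rho_0$ the identity and recovers the elementary criteria~(\ref{cddotTvddot}) and~(\ref{vdotidotLTcdotidot}), while $l=1$ is the log-scaling $\rho=\ln$ emphasized throughout the paper.
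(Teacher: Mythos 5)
Your proposal is correct and follows essentially the same route as the paper: the paper's proof likewise invokes Corollary~\ref{MarketSCol-I} with $\rho(z)=\log(z)$ for $l=1$ and $\rho(z)=z^{-l+1}$ for $l\neq 1$, your only (immaterial) deviation being the normalization $y^{1-l}/(1-l)$, whose constant factor cancels inside the absolute values. The supporting observations you flag --- admissibility of $\rho_l$ on $\mathbb{R}_+$ under Assumption~I, and the $f$/$g$ orientation that makes the ratio come out as $\epsilon_{\mathrm{rel}}^{\mathrm{p}}$ versus $\epsilon_{\mathrm{rel}}^{\text{RA}}$ --- are exactly the points the paper leaves implicit.
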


\begin{proof}
The results are obtained by applying Corollary \ref{MarketSCol-I}, criteria
(\ref{rsslrvv}) and (\ref{simple}),\ with $\rho\left(  z\right)  =\log\left(
z\right)  $ for $l=1,$ and $\rho\left(  z\right)  =z^{-l+1}$ for $l\neq1.$
\end{proof}

When the cost and value functions are explicitly available, conditions
(\ref{simple}) or (\ref{maxrelElLd1}) are more convenient to check, whereas,
when explicit expressions are available for the supply and demand functions,
it is more convenient to work with (\ref{rsslrvv}) or (\ref{maxrelPelLd1}).

\begin{example}
\label{EX1}Consider (\ref{Price D Simple}) with $c\left(  x\right)  =x^{\beta
},$ and $v\left(  x\right)  =\left(  x-u\right)  ^{1/\alpha},$ where
$\alpha,\beta>1$ and $u\geq0$ is a constant. First, consider the $u=0$ case.
Then, we have
\begin{gather*}
\lambda\left(  t+1\right)  =\beta\left(  \alpha\lambda\left(  t\right)
\right)  ^{\frac{\alpha\beta-\alpha}{1-\alpha}}\\[0.05in]
\dot{v}\left(  x\right)  =\alpha^{-1}x^{\frac{1-\alpha}{\alpha}},~\ddot
{v}\left(  x\right)  =\left(  1-\alpha\right)  \alpha^{-2}x^{\frac{1-2\alpha
}{\alpha}}\\[0.05in]
\dot{c}\left(  x\right)  =\beta x^{\beta-1},~\ddot{c}\left(  x\right)
=\beta\left(  \beta-1\right)  x^{\beta-2}%
\end{gather*}
It can be verified that there cannot exists a constant $\theta\in
\lbrack0,\infty)$ for which (\ref{cddotTvddot}) is satisfied for all
$x\in\overline{\mathbb{R}}_{+}$ (equivalently, $\theta^{\ast}\left(  0\right)
=\infty$). However$,$ by invoking (\ref{maxrelElLd1}) with $k=1,$ we have:%
\[
\eta^{\ast}=\frac{\left\vert \ddot{c}\left(  x\right)  \right\vert
}{\left\vert \ddot{v}\left(  x\right)  \right\vert }\frac{\left\vert \dot
{v}\left(  x\right)  \right\vert }{\left\vert \dot{c}\left(  x\right)
\right\vert }=\frac{\left(  \beta-1\right)  }{\left(  \alpha-1\right)
\alpha^{-1}}<1
\]
Hence, the system is stable if
\[
\beta<2-\alpha^{-1}%
\]
It can be shown that the condition is also necessary and the system diverges
for $\beta>2-\alpha^{-1}.$ Moreover, invoking (\ref{maxrelPelLd1}) with $l=1$
yields exactly the same result, though, this need not be the case in general.
Consider now the same system with $\alpha=\beta=2$ and $u>0.$ Simulations show
that the system is not stable in the asymptotic sense for $u<1/4.$ The
following table summarizes the results of our analysis.%
\begin{gather*}
\text{Table I}\\%
\begin{tabular}
[c]{|r|c|c|c|}\hline
& $u=0.25$ & $u=0.3$ & $u=0.5$\\\hline
$\theta^{\ast}\left(  1\right)  =$ & $2$ & $2$ & $2$\\\hline
$\theta^{\ast}\left(  1.5\right)  =$ & $1$ & $0.872$ & $0.595$\\\hline
$\theta^{\ast}\left(  2\right)  =$ & $1.299$ & $0.988$ & $0.459$\\\hline
\end{tabular}
\end{gather*}
Thus, when $u=1/4,$ the system is at least marginally stable. Furthermore, the
above analysis highlights the importance of the notion of generalized
elasticity introduced earlier (cf. Definition \ref{def:GenEl}), as
$\theta^{\ast}\left(  1\right)  $ (which is associated with the traditional
notion of price elasticity) can be greater than one while the system is stable
and it's stability may be proven using the\ MRPE for some $l\geq0.$
\end{example}

The preceding analysis is based on applying the results of Theorem \ref{Main}
and Corollary \ref{MarketSCol-II} to systems of the form (\ref{Price D}) (or
(\ref{Demand D})), which correspond to the persistence prediction model,
whether it is demand prediction by the ISO in the exant\'{e} pricing case,\ or
price prediction by the consumers in the ex-post pricing case. In the next
section, we present a theorem that is applicable to analysis under the generic
prediction models (\ref{PricePredStep}) and (\ref{PredStep}).

\subsection{Invariance Analysis}

When functions of the form (\ref{PricePredStep}) or (\ref{PredStep}) are used
for prediction of price or demand, the underlying dynamical system is no
longer a scalar system. An immediate extension of Theorem \ref{Main} in its
full generality to the multidimensional case, while possible, raises further
complexities in both the proof and the application of the theorem. In what
follows we take the middle way: we present a theorem that exploits the
structure of the dynamical system that arises from autoregressive prediction
models to both make the extension possible and to simplify the analysis.

\begin{theorem}
\label{MD-Invariance}Let $x:\overline{\mathbb{Z}}_{+}\rightarrow\mathbb{R},$
be a real-valued sequence satisfying a state-space equation of the form:%
\begin{align}
\hspace{-0.18in}g\left(  x\left(  t+1\right)  \right)  \hspace{-0.08in}  &
=\hspace{-0.08in}f\left(  x\left(  t\right)  ,x\left(  t-1\right)
,\cdots,x\left(  t-n\right)  \right)  \hspace*{0.12in}\label{generalARModel}%
\\[0.05in]
\hspace{-0.18in}\left(  x\left(  0\right)  ,...,x\left(  n\right)  \right)
\hspace{-0.08in}  & \in\hspace{-0.08in}X_{0}\subset\mathbb{R}^{n+1},\nonumber
\end{align}
for some continuously differentiable function $f:\mathbb{R}^{n+1}%
\mapsto\mathbb{R},$ and a continuously differentiable monotonic function
$g:\mathbb{R\mapsto R}$ which satisfy%
\begin{equation}
\left\vert \frac{\partial}{\partial y_{k}}f\left(  y\right)  \right\vert
\leq\theta_{k}\left\vert \overset{}{\dot{g}}\left(  y_{k}\right)  \right\vert
,\quad\forall y\in\mathbb{R}^{n+1}\label{MDCondition}%
\end{equation}
where%
\begin{equation}%
{\displaystyle\sum\limits_{k=1}^{n}}
\theta_{k}\leq1\label{sumthetald1}%
\end{equation}
Then, there exists a constant $\gamma_{0}\geq0,$ which depends only on the
first $n+1$ initial states $x\left(  n\right)  ,...,x\left(  0\right)  $, such
that the set
\begin{equation}
\Omega_{0}\hspace*{-0.02in}=\hspace*{-0.02in}\left\{  \hspace*{-0.02in}%
x\in\mathbb{R~}|~\exists z\in\mathbb{R}^{n}:\left\vert g\left(  x\right)
-f\left(  x,z\right)  \right\vert \leq\gamma_{0}\hspace*{-0.02in}\right\}
\label{MDInvSet}%
\end{equation}
is invariant under (\ref{generalARModel}), i.e.,%
\[
x\left(  T-n\right)  ,...,x\left(  T\right)  \in\Omega_{0}\Rightarrow x\left(
t\right)  \in\Omega_{0},\text{ }\forall t>T
\]
Furthermore, when (\ref{sumthetald1}) holds with strict inequality, the
$g$-scaled IAV of $x$ is bounded from above:%
\begin{equation}
\mathcal{V}_{g}\left(  x\right)  =%
{\displaystyle\sum\limits_{t=1}^{\infty}}
\left\vert g\left(  x\left(  t+1\right)  \right)  -g\left(  x\left(  t\right)
\right)  \right\vert \leq\frac{\gamma_{0}}{1-%
{\displaystyle\sum\limits_{k=1}^{n}}
\theta_{k}}\label{MDvarupbound}%
\end{equation}

\end{theorem}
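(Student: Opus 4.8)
The plan is to collapse the $(n+1)$-dimensional recursion (\ref{generalARModel}) into a scalar convolution inequality for the one-step increments of $g$ along the trajectory, and then extract both the invariance of $\Omega_{0}$ and the bound (\ref{MDvarupbound}) from that inequality. \emph{Step 1 (reduction).} Set $\Delta(t)=g\left(x(t+1)\right)-g\left(x(t)\right)$ for $t\ge 0$. Evaluating (\ref{generalARModel}) at $t$ and at $t-1$ and subtracting gives, for $t\ge n+1$,
\[
\Delta(t)=f\!\left(x(t),\dots,x(t-n)\right)-f\!\left(x(t-1),\dots,x(t-n-1)\right).
\]
The two argument windows of $f$ coincide up to a one-step shift, so I would expand this difference along a telescoping path that updates one slot at a time, replacing $x(t-1-j)$ by $x(t-j)$ in slot $j$, $j=0,\dots,n$. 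Each single-slot difference is an increment of the one-variable restriction of $f$ to that slot, whose derivative is dominated by $\theta_{j}|\dot g|$ by (\ref{MDCondition}); Lemma \ref{MainLemma} — this is exactly where the monotonicity of $g$ enters — then bounds it by $\theta_{j}\,|g(x(t-j))-g(x(t-1-j))|=\theta_{j}\,|\Delta(t-1-j)|$. Summing the slots yields the convolution inequality
\[
|\Delta(t)|\ \le\ \sum_{k=0}^{n}\theta_{k}\,|\Delta(t-1-k)|,\qquad t\ge n+1,
\]
with total gain $\sum_{k}\theta_{k}\le 1$ by (\ref{sumthetald1}) and with every referenced index nonnegative.

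\emph{Step 2 (invariance).} I would put $\gamma_{0}:=\sum_{j=0}^{n}|\Delta(j)|$; since $x(n+1)$ is already determined by $x(0),\dots,x(n)$ through (\ref{generalARModel}), $\gamma_{0}$ depends only on the first $n+1$ states. I would then induct on $t$: the base case $t\le n$ is immediate, and the inductive step combines the Step 1 inequality with $\sum_{k}\theta_{k}\le 1$ to give $|\Delta(t)|\le\gamma_{0}$ for every $t$. Finally, for any $t\ge n$ the choice $z=\left(x(t-1),\dots,x(t-n)\right)\in\mathbb{R}^{n}$ makes $f(x(t),z)=g\left(x(t+1)\right)$, so $|g(x(t))-f(x(t),z)|=|\Delta(t)|\le\gamma_{0}$, i.e. $x(t)\in\Omega_{0}$; hence the whole trajectory lies in $\Omega_{0}$ from time $n$ on, which is the asserted invariance.

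\emph{Step 3 (IAV bound).} Assume now $\sum_{k}\theta_{k}<1$. I would sum the Step 1 inequality over $t\ge n+1$ and interchange the order of summation, noting that each shifted tail $\sum_{t\ge n+1}|\Delta(t-1-k)|$ is at most $\Sigma:=\sum_{t\ge 0}|\Delta(t)|$; with $\Sigma_{0}:=\sum_{j=0}^{n}|\Delta(j)|=\gamma_{0}$ this gives $\Sigma-\Sigma_{0}\le\bigl(\sum_{k}\theta_{k}\bigr)\Sigma$, whence
\[
\mathcal{V}_{g}(x)=\sum_{t\ge 1}|\Delta(t)|\ \le\ \Sigma\ \le\ \frac{\gamma_{0}}{\,1-\sum_{k}\theta_{k}\,},
\]
which is (\ref{MDvarupbound}); in particular $\Sigma<\infty$, so $\{g(x(t))\}$ is Cauchy and converges.

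\emph{Main obstacle.} The one genuinely substantive step is Step 1: recognizing that the overlap of the argument windows of $f$ at consecutive times collapses the vector recursion to a scalar convolution inequality, and carrying out the coordinate-wise telescoping with clean index bookkeeping — in particular restricting to $t\ge n+1$ so that every increment that appears carries a nonnegative time index, and correctly matching the $j$-th slot change to $\Delta(t-1-j)$. Steps 2 and 3 are then routine induction and a geometric-type summation; the monotonicity hypothesis on $g$ is used only through Lemma \ref{MainLemma}.
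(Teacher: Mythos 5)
Your proof is correct and follows essentially the same route as the paper's: your convolution inequality $|\Delta(t)|\le\sum_{k}\theta_{k}|\Delta(t-1-k)|$ is the paper's Lyapunov-difference estimate in disguise (the paper's $V(x(t+1),x(t))$ is exactly your $|\Delta(t+1)|$, and your $\gamma_{0}$ coincides with the paper's), obtained by the same slot-by-slot telescoping of $f$ across the overlapping argument windows combined with Lemma \ref{MainLemma}. The only caveat, purely cosmetic, is that Step 3 should be run on finite partial sums $\sum_{t=0}^{T}|\Delta(t)|$ before letting $T\to\infty$, so that the inequality $\Sigma-\Sigma_{0}\le\bigl(\sum_{k}\theta_{k}\bigr)\Sigma$ is not vacuous in the a priori case $\Sigma=\infty$; this is exactly how the paper's summed inequality is organized.
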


\begin{proof}
For simplicity and convenience in notation, we prove the theorem for the $n=1
$ case$.$ The proof for the general case is entirely analogous. Define the
function $V:\mathbb{R}^{2}\mapsto\mathbb{R}_{+}$ according to%
\begin{equation}
V\left(  x,z\right)  =\left\vert g\left(  x\right)  -f\left(  x,z\right)
\right\vert \label{Vxz}%
\end{equation}
Let
\begin{equation}
\gamma_{0}=V\left(  x\left(  1\right)  ,x\left(  0\right)  \right)
+\left\vert g\left(  x\left(  1\right)  \right)  -g\left(  x\left(  0\right)
\right)  \right\vert \label{gam0}%
\end{equation}
To prove that $\Omega_{0}$ is invariant under (\ref{generalARModel}), it is
sufficient to show that%
\begin{equation}
V\left(  x\left(  T+1\right)  ,x\left(  T\right)  \right)  \leq\gamma
_{0},\text{ }\forall T\in\mathbb{Z}_{+}\label{toshow}%
\end{equation}
To simplify the notation, define $\Delta f_{t}=f\left(  x\left(  t+1\right)
,x\left(  t\right)  \right)  -f\left(  x\left(  t\right)  ,x\left(
t-1\right)  \right)  ,$ and $\Delta g_{t}=g\left(  x\left(  t+1\right)
\right)  -g\left(  x\left(  t\right)  \right)  .$ We have:\vspace*{0.1in}%
\[
\hspace*{-1in}V\left(  x\left(  t+1\right)  ,x\left(  t\right)  \right)
-V\left(  x\left(  t\right)  ,x\left(  t-1\right)  \right)
\]
\vspace*{-0.2in}%
\begin{multline*}
=\left\vert g\left(  x\left(  t+1\right)  \right)  -f\left(  x\left(
t+1\right)  ,x\left(  t\right)  \right)  \right\vert \\
-\left\vert g\left(  x\left(  t\right)  \right)  -f\left(  x\left(  t\right)
,x\left(  t-1\right)  \right)  \right\vert
\end{multline*}
\vspace*{-0.2in}%
\begin{multline*}
=\left\vert f\left(  x\left(  t\right)  ,x\left(  t-1\right)  \right)
-f\left(  x\left(  t+1\right)  ,x\left(  t\right)  \right)  \right\vert \\
-\left\vert g\left(  x\left(  t\right)  \right)  -g\left(  x\left(
t+1\right)  \right)  \right\vert
\end{multline*}
\vspace*{-0.2in}%
\begin{multline*}
\leq\left\vert f\left(  x\left(  t\right)  ,x\left(  t-1\right)  \right)
-f\left(  x\left(  t\right)  ,x\left(  t\right)  \right)  \right\vert \\
+\left\vert f\left(  x\left(  t\right)  ,x\left(  t\right)  \right)  -f\left(
x\left(  t+1\right)  ,x\left(  t\right)  \right)  \right\vert -\left\vert
\Delta g_{t}\right\vert
\end{multline*}%
\begin{equation}
\hspace*{-1.58in}\leq\theta_{2}\left\vert \Delta g_{t-1}\right\vert +\left(
\theta_{1}-1\right)  \left\vert \Delta g_{t}\right\vert \label{FourthIneq4_2}%
\end{equation}
where the first inequality is obtained by applying the triangular inequality,
and (\ref{FourthIneq4_2}) follows from (\ref{MDCondition}) and Lemma
\ref{MainLemma}. By summing up both sides of (\ref{FourthIneq4_2}) from $t=0$
to $t=T$ we obtain:%
\begin{multline}
V\left(  x\left(  T+1\right)  ,x\left(  T\right)  \right)  \leq V\left(
x\left(  1\right)  ,x\left(  0\right)  \right) \label{VTldanV0}\\
+\left(  \theta_{1}+\theta_{2}-1\right)
{\displaystyle\sum\limits_{t=1}^{T}}
\left\vert \Delta g_{t}\right\vert +\theta_{2}\left(  \left\vert \Delta
g_{0}\right\vert -\left\vert \Delta g_{T}\right\vert \right)
\end{multline}
The inequality (\ref{toshow})\ then follows from (\ref{VTldanV0}) and
(\ref{sumthetald1}). When (\ref{sumthetald1}) holds with strict inequality,
(\ref{MDvarupbound}) follows from (\ref{VTldanV0}) and nonnegativity of
$V\left(  x\left(  T+1\right)  ,x\left(  T\right)  \right)  $ for all
$T\in\mathbb{Z}.$ This completes the proof.
\end{proof}

It follows from the proof of Theorem \ref{MD-Invariance} that when the initial
conditions are close to the equilibrium of (\ref{generalARModel}), it is
sufficient to satisfy conditions (\ref{MDCondition})--(\ref{sumthetald1}) only
locally, over a properly defined subset of $\mathbb{R}^{n+1}$. This is
summarized in the following corollary.

\begin{corollary}
\label{col:MD-Invariance}Let $x:\overline{\mathbb{Z}}_{+}\rightarrow
\mathbb{R},$ be a real-valued sequence satisfying (\ref{generalARModel}),
where $f$ and $g$ are continuously differentiable functions. Let%
\[
\widetilde{\Omega}_{0}\hspace*{-0.02in}=\hspace*{-0.02in}\left\{
\hspace*{-0.02in}\left(  x,z\right)  \in\mathbb{R}\times\mathbb{R}%
^{n}:\left\vert g\left(  x\right)  -f\left(  x,z\right)  \right\vert
\leq\gamma_{0}\hspace*{-0.02in}\right\}
\]
where $\gamma_{0}$ is given in (\ref{Vxz})--(\ref{gam0}). If
\[
\left\vert \frac{\partial}{\partial y_{k}}f\left(  y\right)  \right\vert
\leq\theta_{k}\left\vert \overset{}{\dot{g}}\left(  y_{k}\right)  \right\vert
,\quad\forall y\in\widetilde{\Omega}_{0}%
\]
where $\theta_{k}$'s satisfy (\ref{sumthetald1}), then $\widetilde{\Omega}%
_{0}$ is invariant under (\ref{generalARModel}). Furthermore, when
(\ref{sumthetald1}) holds with strict inequality, and the initialization
vector $x_{0}=\left[  x\left(  n\right)  ,\ldots,x\left(  0\right)  \right]
\,$is an element of $\widetilde{\Omega}_{0}$,$\ $then (\ref{MDvarupbound}) holds.
\end{corollary}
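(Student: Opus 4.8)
The plan is to re-run the proof of Theorem~\ref{MD-Invariance} essentially verbatim, observing that every estimate there is evaluated only at points produced by the trajectory $\{x(t)\}$ (or on one-coordinate line segments joining consecutive-in-argument values of it), so that the \emph{local} gradient bound assumed here is enough — provided we can certify that all those points remain inside $\widetilde{\Omega}_{0}$. As in that proof I would treat the $n=1$ case, the general case being structurally identical. Keep $V(x,z)=\vert g(x)-f(x,z)\vert$ and $\gamma_{0}$ exactly as in (\ref{Vxz})--(\ref{gam0}); note that by construction $V(x(1),x(0))\le\gamma_{0}$, which is precisely the statement that the initialization pair $[x(1),x(0)]$ lies in $\widetilde{\Omega}_{0}$ (and analogously $[x(n),\dots,x(0)]\in\widetilde{\Omega}_{0}$ for general $n$), so that hypothesis in the second part of the corollary is automatically met.

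The core is an induction on $T$ with inductive hypothesis ``$V(x(t+1),x(t))\le\gamma_{0}$ for all $0\le t\le T$'', which in particular places the pairs $(x(t+1),x(t))$, $t\le T$, in $\widetilde{\Omega}_{0}$. Granting (this is the point that must be nailed down, see below) that the one-coordinate segments along which Lemma~\ref{MainLemma} is invoked — namely $\{(x(t),s)\}$ with $s$ between $x(t-1)$ and $x(t)$, and $\{(s,x(t))\}$ with $s$ between $x(t)$ and $x(t+1)$ — also lie in $\widetilde{\Omega}_{0}$, the local partial-derivative bounds of the corollary hold on them, so the triangle-inequality plus Lemma~\ref{MainLemma} chain that produces (\ref{FourthIneq4_2}) is valid for $t=1,\dots,T$. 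Summing (\ref{FourthIneq4_2}) from $t=1$ to $T$ and invoking the summability condition (\ref{sumthetald1}) exactly as in the proof of Theorem~\ref{MD-Invariance} gives (\ref{VTldanV0}) and hence $V(x(T+1),x(T))\le\gamma_{0}$, which closes the induction and yields invariance of $\widetilde{\Omega}_{0}$ under (\ref{generalARModel}) in the sense of Theorem~\ref{MD-Invariance}. When (\ref{sumthetald1}) is strict, rearranging (\ref{VTldanV0}) and using $V\ge0$ gives $\sum_{t=1}^{T}\vert g(x(t+1))-g(x(t))\vert\le\gamma_{0}/(1-\sum_{k}\theta_{k})$ uniformly in $T$, i.e.\ (\ref{MDvarupbound}).

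The step I expect to be the real obstacle is the segment-containment claim just used: $\widetilde{\Omega}_{0}$ is a sublevel set of $(x,z)\mapsto\vert g(x)-f(x,z)\vert$ and need not be convex, so knowing that $(x(t),x(t-1))$ and $(x(t+1),x(t))$ both lie in $\widetilde{\Omega}_{0}$ does not by itself put the connecting one-coordinate segments in $\widetilde{\Omega}_{0}$, where the local bound is available. This is exactly what ``a properly defined subset of $\mathbb{R}^{n+1}$'' in the discussion preceding the corollary is meant to handle: one works not with the full sublevel set but with, e.g., its connected component containing the initial pair, or with a product $I_{1}\times\cdots\times I_{n+1}$ of intervals lying inside $\widetilde{\Omega}_{0}$ and containing it, on which the local gradient inequality is imposed; monotonicity and continuity of $g$ together with the one-step bound on $V$ then keep consecutive trajectory coordinates in these intervals. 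Once this geometric reduction is in place, the estimates are word for word those of Theorem~\ref{MD-Invariance}, and this containment bookkeeping is the only ingredient not already supplied by that theorem.
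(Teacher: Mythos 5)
Your proposal follows exactly the route the paper intends: the paper gives no separate proof of this corollary, merely asserting that it \textquotedblleft follows from the proof of Theorem \ref{MD-Invariance},\textquotedblright\ and your plan of re-running that proof with an induction certifying that the successive state vectors remain in $\widetilde{\Omega}_{0}$ (so that the local derivative bounds suffice) is precisely the argument being gestured at, down to the same use of (\ref{FourthIneq4_2}), (\ref{VTldanV0}), and the summation yielding (\ref{MDvarupbound}). The segment-containment obstacle you flag is genuine: Lemma \ref{MainLemma} is invoked along one-coordinate segments passing through points such as $\left(  x\left(  t\right)  ,x\left(  t\right)  \right)  $, which are not trajectory pairs, and the sublevel set $\widetilde{\Omega}_{0}$ need not contain them, so the hypothesis as literally stated does not quite cover every point at which the derivative bound is used; this is an imprecision in the paper's own statement (it silently assumes the bound on a set large enough to contain those segments, cf.\ the phrase \textquotedblleft properly defined subset of $\mathbb{R}^{n+1}$\textquotedblright) rather than a defect of your argument, and your proposed repair of imposing the bound on a coordinate-wise interval hull of $\widetilde{\Omega}_{0}$ containing the trajectory is the natural way to make the corollary airtight.
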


Theorem \ref{MD-Invariance} and Corollary \ref{col:MD-Invariance} can be
applied to analysis of market dynamics under the generic autoregressive
prediction models that were presented in Section \ref{DSDM}. This includes the
generic dynamical system models that were developed for price dynamics under
exant\'{e} or ex-post pricing ((\ref{MinCost})--(\ref{OptDemand}) and
(\ref{ExPCmplxPriceD})), as well as the aggregate demand dynamical systems
(\ref{TheEq3}) and (\ref{TheEq4}). The sets $\Omega_{0}$ or $\widetilde
{\Omega}_{0}$ being invariant implies that the difference between the
predicted demand and the actual supply (possibly scaled by some monotonic
function, e.g., $\log\left(  \cdot\right)  $) remains bounded.

\subsubsection{Analysis of Market Dynamics under Generic Autoregressive
Prediction Models}

In this section we examine the impact of linear autoregressive prediction
models on market stability. Consider the model (\ref{AX Price D Simple}),
repeated here for convenience:%
\[
\dot{c}^{-1}\left(  \lambda\left(  t+1\right)  \right)  =\sum\nolimits_{k=0}%
^{n}\alpha_{k}\dot{v}^{-1}\left(  \lambda\left(  t-k\right)  \right)
\]
We apply Theorem \ref{MD-Invariance}\ (alternatively Corollary
\ref{col:MD-Invariance}) with
\begin{equation}
g\left(  \lambda\right)  =\rho\left(  \dot{c}^{-1}\left(  \lambda\right)
\right) \label{rho1}%
\end{equation}
and%
\begin{equation}
f\left(  \lambda_{t},\ldots,\lambda_{t-n}\right)  =\rho\left(  \sum
\nolimits_{k=0}^{n}\alpha_{k}\dot{v}^{-1}\left(  \lambda_{t-k}\right)  \right)
\label{rho2}%
\end{equation}
We examine (\ref{rho1})$-$(\ref{rho2}) with $\rho\left(  z\right)
=\log\left(  z\right)  $ and $\rho\left(  z\right)  =z^{-l+1},$ $l\neq1.$
Conditions (\ref{MDCondition})$-$(\ref{sumthetald1}) then imply that the
following conditions are sufficient\ (for some $k\geq0$):%
\begin{align}
\left\vert \frac{\alpha_{k}\displaystyle\left.  \frac{\partial\dot{v}%
^{-1}\left(  \lambda\right)  }{\partial\lambda}\right\vert _{\lambda
=\lambda_{t-k}}}{\displaystyle\left[  \sum\nolimits_{j=0}^{n}\alpha_{j}\dot
{v}^{-1}\left(  \lambda_{t-j}\right)  \right]  ^{l}}\right\vert  & \leq
\theta_{k}\left\vert \epsilon_{S}^{\text{p}}\left(  \lambda_{t-k},l\right)
\right\vert \label{2complicated}\\%
{\displaystyle\sum\limits_{k=1}^{n}}
\theta_{k}  & \leq1\label{2complicatedb}%
\end{align}
Conditions (\ref{2complicated})--(\ref{2complicatedb}) are complicated and in
general demand numerical computation for verification. However, examination of
(\ref{2complicated}) near equilibrium is informative. Suppose that
(\ref{AX Price D Simple}) converges to an equilibrium price $\bar{\lambda}$.
Letting $\lambda_{t}=\lambda_{t-1}=\cdots=\lambda_{t-n}=\bar{\lambda},$ we
observe that the following condition is implied by (\ref{2complicated}%
)--(\ref{2complicatedb}):%
\begin{equation}
\exists l\geq0:\left\vert
{\displaystyle\sum\nolimits_{k=1}^{n}}
a_{k}\right\vert \left\vert \epsilon_{D}^{\text{p}}\left(  \bar{\lambda
},l\right)  \right\vert \leq\left\vert \epsilon_{S}^{\text{p}}\left(
\bar{\lambda},l\right)  \right\vert \left\vert
{\displaystyle\sum\nolimits_{k=1}^{n}}
a_{k}\right\vert ^{l}\label{localanalysis}%
\end{equation}
where $\epsilon_{D}^{\text{p}}\left(  \bar{\lambda},l\right)  $ and
$\epsilon_{D}^{\text{p}}\left(  \bar{\lambda},l\right)  $ are generalized
elasticities as defined in Definition \ref{def:GenEl}, evaluated at the
equilibrium. It can be shown that (\ref{localanalysis}) is equivalent to
$\epsilon_{\text{rel}}^{\text{p}}\left(  \bar{\lambda},1\right)  \leq1,$
independently of $l.$ Furthermore, for a large class of cost and value
functions, namely power functions of the form $c\left(  x\right)  =x^{\beta}$
and $v\left(  x\right)  =x^{1/\alpha},$ $\alpha,\beta\geq1,$ the equilibrium
relative elasticity $\theta\left(  \bar{\lambda}\right)  =\epsilon
_{\text{rel}}^{\text{p}}\left(  \bar{\lambda},1\right)  $ is independent of
the autoregressive coefficients $a_{k},$ $k=1,..,n.$ Thus, if the closed-loop
market is unstable under the persistent prediction model ($a_{1}=1,$
$a_{k}=0,$ $k\neq1$), then global stability cannot be verified for any linear
auto-regressive model of the form (\ref{AX Price D Simple}) using
(\ref{2complicated})--(\ref{2complicatedb}). Although this analysis is based
on sufficient criteria, it suggests that it may be difficult to globally
stabilize these systems via linear autoregressive prediction. Indeed,
extensive simulations show that such models will not globally
stabilize\textbf{\ }an unstable market, unless the MRPE\ is very close to one.
For values of $\theta^{\ast}>1.05$ global stabilization could not be achieved
in our simulations. Local stabilization is, however, possible for moderate
values of $\theta^{\ast},$ namely, $\theta^{\ast}\lessapprox3.$\medskip

\subsubsection{Analysis of Dynamics of Markets with Exogenous Inputs\medskip}

\begin{theorem}
\label{1D-Invariancewithu}Let $x:\overline{\mathbb{Z}}_{+}\rightarrow
\mathbb{R}$ and $u:\overline{\mathbb{Z}}_{+}\rightarrow\mathbb{R}$ be
real-valued sequences which satisfy a state-space equation of the form:%
\begin{align}
\hspace{-0.18in}g\left(  x\left(  t+1\right)  \right)  \hspace{-0.07in}  &
=\hspace{-0.07in}f\left(  x\left(  t\right)  ,u\left(  t\right)  \right)
,\quad u\left(  t\right)  \in U\label{dswithu}\\[0.05in]
x\left(  0\right)  \hspace{-0.07in}  & \in\hspace{-0.07in}X_{0}\subset
\mathbb{R}\nonumber
\end{align}
for some continuously differentiable function $f:\mathbb{R}^{2}\rightarrow
\mathbb{R}$ and a continuously differentiable monotonic function
$g:\mathbb{R}\rightarrow\mathbb{R}$ satisfying
\begin{equation}
\left\vert \frac{\partial}{\partial u}f\left(  x,u\right)  \right\vert
\leq1,\text{\quad}\forall x\in\mathbb{R},\text{ }u\in U\label{UConstraint}%
\end{equation}
and
\begin{equation}
\left\vert \frac{\partial}{\partial x}f\left(  x,u\right)  \right\vert
\leq\theta\left\vert \overset{}{\dot{g}}\left(  x\right)  \right\vert
,\text{\quad}\forall x\in\mathbb{R},\text{ }u\in U,\label{ThetaConstraint}%
\end{equation}
where
\[
U=\left\{  u\in\mathbb{R}:\left\vert u\right\vert \leq\kappa\right\}
\]
and $\kappa\in\left(  0,\infty\right)  ,$ and $\theta\in\lbrack0,1).$ Define%
\begin{equation}
\zeta_{\kappa}\left(  \theta\right)  =\kappa\frac{1+\theta}{1-\theta
}.\label{alphatheta}%
\end{equation}
Then, the set
\begin{equation}
\Omega\left(  \theta\right)  \hspace*{-0.02in}=\hspace*{-0.02in}\left\{
\hspace*{-0.02in}x:\left\vert
\begin{array}
[c]{c}%
\hspace{-0.3in}\vspace*{0.03in}%
\end{array}
\left\vert f\left(  x,\nu\right)  -g\left(  x\right)  \right\vert -\left\vert
\nu\right\vert
\begin{array}
[c]{c}%
\hspace{-0.3in}\vspace*{0.03in}%
\end{array}
\right\vert \leq\zeta_{\kappa}\left(  \theta\right)  ,\text{ }\forall\nu\in
U\hspace*{-0.02in}\right\} \label{InvSet}%
\end{equation}
is invariant under (\ref{dswithu}). Furthermore, the $g$-scaled IMV of $x$ is
bounded from above:%
\begin{equation}
\overline{\mathcal{V}}_{g}\left(  x\right)  =\lim_{T\rightarrow\infty}\frac
{1}{T}%
{\displaystyle\sum\limits_{t=1}^{T}}
\left\vert g\left(  x\left(  t+1\right)  \right)  -g\left(  x\left(  t\right)
\right)  \right\vert \leq\frac{2\kappa}{1-\theta}\label{1Dvarupbound}%
\end{equation}

\end{theorem}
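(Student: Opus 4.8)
The plan is to adapt the Lyapunov-style argument used in the proof of Theorem \ref{MD-Invariance} to the setting with an exogenous input, using the candidate function $V(x,u) = \left\vert f(x,u) - g(x)\right\vert$ and tracking its evolution along trajectories of (\ref{dswithu}). The key observation is that $g(x(t+1)) = f(x(t),u(t))$, so $V(x(t+1),u(t+1)) = \left\vert f(x(t+1),u(t+1)) - f(x(t),u(t)) \right\vert$. I would then split the increment $f(x(t+1),u(t+1)) - f(x(t),u(t))$ into a term reflecting the change in the first argument and a term reflecting the change in the second argument, inserting the intermediate point $f(x(t+1),u(t))$ and using the triangle inequality. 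The first piece is bounded via Lemma \ref{MainLemma} (using (\ref{ThetaConstraint})) by $\theta\left\vert g(x(t+1)) - g(x(t))\right\vert = \theta\left\vert \Delta g_t\right\vert$, and the second piece is bounded via (\ref{UConstraint}) by $\left\vert u(t+1) - u(t)\right\vert \le 2\kappa$ since $u(\cdot)\in U$.

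First I would establish the volatility bound (\ref{1Dvarupbound}), which is the cleaner part. Writing $\left\vert \Delta g_t\right\vert = \left\vert g(x(t+1)) - g(x(t))\right\vert = \left\vert f(x(t),u(t)) - f(x(t-1),u(t-1))\right\vert$ and applying the same two-term split, I obtain $\left\vert \Delta g_t\right\vert \le \theta\left\vert \Delta g_{t-1}\right\vert + 2\kappa$ for all $t$. Iterating this contraction-with-drift recursion gives $\limsup_t \left\vert \Delta g_t\right\vert \le 2\kappa/(1-\theta)$, and a Cesàro-averaging argument then yields $\overline{\mathcal V}_g(x) \le 2\kappa/(1-\theta)$ exactly as claimed; the finitely many transient terms wash out in the $\frac1T\sum$ limit. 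This step is essentially a one-dimensional linear recursion estimate and should be routine once the split is set up.

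For the invariance of $\Omega(\theta)$, the approach is to show $V(x(t+1),u(t))$ stays controlled. From the two-term split above, $V(x(t+1),u(t)) = \left\vert f(x(t+1),u(t)) - g(x(t+1))\right\vert$; substituting $g(x(t+1)) = f(x(t),u(t))$ and bounding with Lemma \ref{MainLemma} and (\ref{UConstraint}), one gets a recursion of the form $V(x(t+1),u(t+1)) \le \theta\, V'(\cdot) + 2\kappa$ — more precisely I expect to bound the quantity $\left\vert V(x,u) - \left\vert u\right\vert\right\vert$ appearing in the definition (\ref{InvSet}), because that is the object that absorbs the input drift cleanly. The constant $\zeta_\kappa(\theta) = \kappa(1+\theta)/(1-\theta)$ is precisely the fixed point of the map $r \mapsto \theta r + \kappa + \theta\kappa$ (equivalently $r\mapsto \theta(r+\kappa)+\kappa$), which strongly suggests the correct recursion is $\left\vert V(x(t+1),\nu) - \left\vert\nu\right\vert\right\vert \le \theta\left(\left\vert V(x(t),\nu')-\left\vert\nu'\right\vert\right\vert + \kappa\right) + \kappa$ uniformly over $\nu,\nu'\in U$; showing $\Omega(\theta)$ is forward-invariant then amounts to checking that this map sends $[0,\zeta_\kappa(\theta)]$ into itself.

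The main obstacle I anticipate is getting the bookkeeping of the input terms exactly right so that the somewhat unusual quantity $\bigl\vert\,\left\vert f(x,\nu)-g(x)\right\vert - \left\vert\nu\right\vert\,\bigr\vert$ in (\ref{InvSet}) — rather than just $\left\vert f(x,\nu)-g(x)\right\vert$ — is the genuinely invariant object, and in particular verifying that the $\forall\nu\in U$ quantifier in the set definition is propagated correctly (the state $x(t)$ is controlled uniformly over all admissible current inputs, not just the realized one). Handling the nested absolute values will require care with reverse triangle inequalities $\bigl\vert\left\vert a\right\vert - \left\vert b\right\vert\bigr\vert \le \left\vert a-b\right\vert$, and matching the algebra to the specific constant $\kappa(1+\theta)/(1-\theta)$. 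Everything else — the Lyapunov split, the use of Lemma \ref{MainLemma}, the Cesàro limit — should go through along the lines of the proof of Theorem \ref{MD-Invariance}.
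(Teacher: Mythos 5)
Your proposal is correct and its skeleton coincides with the paper's proof: the same triangle-inequality split of $f(x(t+1),\cdot)-f(x(t),\cdot)$ into an $x$-increment (bounded by $\theta\left\vert g(x(t+1))-g(x(t))\right\vert$ via Lemma \ref{MainLemma} and (\ref{ThetaConstraint})) and a $u$-increment (bounded via (\ref{UConstraint})), and the same identification of $\zeta_{\kappa}(\theta)$ as the fixed point of $r\mapsto\theta(r+\kappa)+\kappa$, which is exactly how the paper's choice $\tau=\theta$ in (\ref{FourthIneq4}) closes the invariance argument. Where you genuinely diverge is the volatility bound: you run the direct contraction-with-drift recursion $\left\vert\Delta g_{t+1}\right\vert\leq\theta\left\vert\Delta g_{t}\right\vert+2\kappa$ on the increments themselves, which gives the eventual pointwise bound $\limsup_{t}\left\vert\Delta g_{t}\right\vert\leq 2\kappa/(1-\theta)$ and hence the Ces\`{a}ro bound (\ref{1Dvarupbound}) a fortiori; the paper instead sets $\tau=1$ in the Lyapunov decrement (\ref{Ineq123}), telescopes, and uses nonnegativity of $V+\zeta_{\kappa}(\theta)$. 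Your route is more elementary and yields a strictly stronger conclusion (asymptotic control of each individual increment, not just the mean), at the cost of not reusing the invariance machinery. One bookkeeping point to fix when you write it out: the recursion you conjecture does not hold \emph{uniformly} over $\nu'\in U$ with the constant $\kappa$ — replacing the realized input $u(t)$ by an arbitrary $\nu'$ costs an extra $2\kappa$. What is actually true, and all you need, is $\sup_{\nu}\bigl\vert\left\vert f(x(t+1),\nu)-g(x(t+1))\right\vert-\left\vert\nu\right\vert\bigr\vert\leq\theta\left\vert\Delta g_{t}\right\vert+\kappa$ together with $\left\vert\Delta g_{t}\right\vert=\left\vert f(x(t),u(t))-g(x(t))\right\vert\leq\bigl\vert\left\vert f(x(t),u(t))-g(x(t))\right\vert-\left\vert u(t)\right\vert\bigr\vert+\kappa$, where the specific choice $\nu'=u(t)$ is then dominated by the supremum defining membership in $\Omega(\theta)$; with that correction your fixed-point check goes through verbatim.
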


\medskip

\begin{proof}
Define%
\[
V\left(  x\right)  =\sup\limits_{\nu\in U}\left\{  \left\vert
\begin{array}
[c]{c}%
\hspace{-0.3in}\vspace*{0.03in}%
\end{array}
\left\vert f\left(  x,\nu\right)  -g\left(  x\right)  \right\vert -\left\vert
\nu\right\vert
\begin{array}
[c]{c}%
\hspace{-0.3in}\vspace*{0.03in}%
\end{array}
\right\vert \right\}  -\zeta_{\kappa}\left(  \theta\right)  .
\]
It is sufficient to show that there exists $\tau\geq0,$ such that:%
\[
V\left(  x\left(  t+1\right)  \right)  -\tau V\left(  x\left(  t\right)
\right)  \leq0,\text{ }\forall t\in\mathbb{Z}_{+}.
\]
To simplify the notation, define $\Delta f_{t}=f\left(  x\left(  t+1\right)
,u\left(  t+1\right)  \right)  -f\left(  x\left(  t\right)  ,u\left(
t\right)  \right)  ,$ and $\Delta g_{t}=g\left(  x\left(  t+1\right)  \right)
-g\left(  x\left(  t\right)  \right)  .$ Then%
\begin{multline*}
V\left(  x\left(  t+1\right)  \right)  -\tau V\left(  x\left(  t\right)
\right) \\
=\sup\limits_{\nu\in U}\left\{  \left\vert
\begin{array}
[c]{c}%
\hspace{-0.3in}\vspace*{0.03in}%
\end{array}
\left\vert f\left(  x\left(  t+1\right)  ,\nu\right)  -g\left(  x\left(
t+1\right)  \right)  \right\vert -\left\vert \nu\right\vert
\begin{array}
[c]{c}%
\hspace{-0.3in}\vspace*{0.03in}%
\end{array}
\right\vert \right\} \\
-\tau\sup\limits_{\nu\in U}\left\{  \left\vert
\begin{array}
[c]{c}%
\hspace{-0.3in}\vspace*{0.03in}%
\end{array}
\left\vert f\left(  x\left(  t\right)  ,\nu\right)  -g\left(  x\left(
t\right)  \right)  \right\vert -\left\vert \nu\right\vert
\begin{array}
[c]{c}%
\hspace{-0.3in}\vspace*{0.03in}%
\end{array}
\right\vert \right\} \\
+\zeta_{\kappa}\left(  \theta\right)  \left(  \tau-1\right)
\end{multline*}%
\begin{multline}
\leq\sup\limits_{\nu\in U}\left\{  \left\vert
\begin{array}
[c]{c}%
\hspace{-0.3in}\vspace*{0.03in}%
\end{array}
\left\vert f\left(  x\left(  t+1\right)  ,\nu\right)  -f\left(  x\left(
t\right)  ,u\left(  t\right)  \right)  \right\vert -\left\vert \nu\right\vert
\begin{array}
[c]{c}%
\hspace{-0.3in}\vspace*{0.03in}%
\end{array}
\right\vert \right\} \label{FirstIneq1}\\
-\tau\left\vert \Delta g_{t}\right\vert +\tau\kappa+\zeta_{\kappa}\left(
\theta\right)  \left(  \tau-1\right)
\end{multline}%
\begin{multline}
\leq\sup\limits_{\nu\in U}\left\vert f\left(  x\left(  t+1\right)
,\nu\right)  -f\left(  x\left(  t\right)  ,\nu\right)  \right\vert
\label{SecondIneq2}\\
+\sup\limits_{\nu\in U}\left\{  \left\vert
\begin{array}
[c]{c}%
\hspace{-0.3in}\vspace*{0.03in}%
\end{array}
\left\vert f\left(  x\left(  t\right)  ,\nu\right)  -f\left(  x\left(
t\right)  ,u\left(  t\right)  \right)  \right\vert -\left\vert \nu\right\vert
\begin{array}
[c]{c}%
\hspace{-0.3in}\vspace*{0.03in}%
\end{array}
\right\vert \right\} \\
-\tau\left\vert \Delta g_{t}\right\vert +\tau\kappa+\zeta_{\kappa}\left(
\theta\right)  \left(  \tau-1\right)
\end{multline}%
\begin{equation}
\hspace{-0.84in}\leq\left(  \theta-\tau\right)  \left\vert \Delta
g_{t}\right\vert +\left(  1+\tau\right)  \kappa+\zeta_{\kappa}\left(
\theta\right)  \left(  \tau-1\right) \label{FourthIneq4}%
\end{equation}
where (\ref{FirstIneq1}) follows from the choice of $\nu=u\left(  t\right)  $
and $\left\vert u\left(  t\right)  \right\vert \leq\kappa,$ (\ref{SecondIneq2}%
) follows from the triangular inequality, and (\ref{FourthIneq4}) follows from
(\ref{UConstraint})--(\ref{ThetaConstraint}) and Lemma \ref{MainLemma}. The
desired result follows from the fact that the right-hand side of
(\ref{FourthIneq4}) will be non-positive for $\tau=\theta,$ and $\zeta
_{\kappa}\left(  \theta\right)  $ defined in (\ref{alphatheta}). To prove
(\ref{1Dvarupbound}), let $\tau=1$ in (\ref{FourthIneq4}) to obtain%
\begin{equation}
V\left(  x\left(  t+1\right)  \right)  -V\left(  x\left(  t\right)  \right)
\leq\left(  \theta-1\right)  \left\vert \Delta g_{t}\right\vert +2\kappa
\label{Ineq123}%
\end{equation}
Summing both sides of (\ref{Ineq123}) over all $t=0,1,..,T$ results in:%
\begin{equation}
V\left(  x\left(  T+1\right)  \right)  \leq V\left(  x\left(  0\right)
\right)  +\left(  \theta-1\right)
{\displaystyle\sum\limits_{t=1}^{T}}
\left\vert \Delta g_{t}\right\vert +2T\kappa\label{Ineq124}%
\end{equation}
It follows from (\ref{Ineq124})\ and non-negativity of $V\left(  x\left(
T+1\right)  \right)  +\zeta_{\kappa}\left(  \theta\right)  $ that%
\begin{equation}
\left(  1-\theta\right)
{\displaystyle\sum\limits_{t=1}^{T}}
\left\vert \Delta g_{t}\right\vert \leq2T\kappa+V\left(  x\left(  0\right)
\right)  +\zeta_{\kappa}\left(  \theta\right)  .\label{Ineq125}%
\end{equation}
The desired result (\ref{1Dvarupbound}) then follows immediately from
(\ref{Ineq124}) by dividing by $T$ and taking the limit as $T\rightarrow
\infty.$
\end{proof}

The following corollary is a local variant of Theorem \ref{1D-Invariancewithu}%
, and is useful for scenarios in which, there exists no positive number
$\theta<1$ such that (\ref{ThetaConstraint}) is satisfied for all
$x\in\mathbb{R},$ whereas it might be possible to satisfy the inequality
locally over a subset that contains $\Omega\left(  \theta\right)  $.

\begin{corollary}
\label{LocalInvariance}Let $x:\overline{\mathbb{Z}}_{+}\rightarrow\mathbb{R} $
and $u:\overline{\mathbb{Z}}_{+}\rightarrow\mathbb{R}$ be real-valued
sequences satisfying (\ref{dswithu}). For $\theta<1,$ define:%
\begin{multline*}
\widetilde{\theta}^{\ast}=\inf\left\{  \widetilde{\theta}:\left\vert
\frac{\partial}{\partial x}f\left(  x,u\right)  \right\vert \leq
\widetilde{\theta}\left\vert \frac{\partial}{\partial x}g\left(  x\right)
\right\vert \right.  ,\\
\left.
\begin{array}
[c]{c}%
\ \\
\
\end{array}
\text{\ }\forall x\in\Omega\left(  \theta\right)  ,\text{ }u\in U\right\}
\end{multline*}
where $\Omega\left(  \theta\right)  $ is given in (\ref{InvSet}). Then
$\Omega\left(  \widetilde{\theta}^{\ast}\right)  $ is invariant under
(\ref{dswithu}) if
\[
\widetilde{\theta}^{\ast}\leq\theta.
\]
Furthermore, (\ref{1Dvarupbound}) holds with $\theta=\widetilde{\theta}^{\ast
}.$
\end{corollary}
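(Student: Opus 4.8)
The plan is to re-run the Lyapunov-type argument from the proof of Theorem \ref{1D-Invariancewithu}, but with the contraction constant $\widetilde{\theta}^{\ast}$ in place of $\theta$, using the hypothesis $\widetilde{\theta}^{\ast}\le\theta$ only to ensure that the region where the local bound on $\partial_{x}f$ is known to hold, namely $\Omega(\theta)$, is never left.

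First I would record the elementary monotonicity fact that the map $\theta\mapsto\zeta_{\kappa}(\theta)=\kappa(1+\theta)/(1-\theta)$ defined in (\ref{alphatheta}) is increasing on $[0,1)$. Combined with $\widetilde{\theta}^{\ast}\le\theta$ this gives $\zeta_{\kappa}(\widetilde{\theta}^{\ast})\le\zeta_{\kappa}(\theta)$, and hence, straight from the definition (\ref{InvSet}), the inclusion $\Omega(\widetilde{\theta}^{\ast})\subseteq\Omega(\theta)$. This inclusion is the one genuinely new ingredient: while a trajectory stays in $\Omega(\widetilde{\theta}^{\ast})$ it also stays in $\Omega(\theta)$, where by construction $|\partial_{x}f(x,u)|\le\widetilde{\theta}^{\ast}|\dot{g}(x)|$ for every $u\in U$, so the contraction estimate is available all along the trajectory.

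Next I would set $V(x)=\sup_{\nu\in U}\big|\,|f(x,\nu)-g(x)|-|\nu|\,\big|-\zeta_{\kappa}(\widetilde{\theta}^{\ast})$, so that $\Omega(\widetilde{\theta}^{\ast})=\{\,x:V(x)\le0\,\}$, and establish invariance by induction on $t$. Assuming $x(t)\in\Omega(\widetilde{\theta}^{\ast})\subseteq\Omega(\theta)$, the chain of estimates (\ref{FirstIneq1})--(\ref{FourthIneq4}) of Theorem \ref{1D-Invariancewithu} carries over, the only change being that Lemma \ref{MainLemma} is applied to $f(\cdot,\nu)$ and $g$ restricted to $\Omega(\theta)$ rather than to all of $\mathbb{R}$, which yields $\sup_{\nu\in U}|f(x(t+1),\nu)-f(x(t),\nu)|\le\widetilde{\theta}^{\ast}|g(x(t+1))-g(x(t))|$. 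Specializing the resulting inequality to $\tau=\widetilde{\theta}^{\ast}$ makes its right-hand side vanish, since $\zeta_{\kappa}$ is defined exactly so that $(1+\tau)\kappa+\zeta_{\kappa}(\widetilde{\theta}^{\ast})(\tau-1)=0$ at $\tau=\widetilde{\theta}^{\ast}$; hence $V(x(t+1))\le\widetilde{\theta}^{\ast}V(x(t))\le0$, i.e., $x(t+1)\in\Omega(\widetilde{\theta}^{\ast})$, which closes the induction. Invariance being established, the entire trajectory lies in $\Omega(\theta)$, so the same chain with $\tau=1$ gives $V(x(t+1))-V(x(t))\le(\widetilde{\theta}^{\ast}-1)|\Delta g_{t}|+2\kappa$; summing over $t=0,\dots,T$, using $V\ge-\zeta_{\kappa}(\widetilde{\theta}^{\ast})$, dividing by $T$ and letting $T\to\infty$ then yields (\ref{1Dvarupbound}) with $\theta=\widetilde{\theta}^{\ast}$, exactly as in the derivation of (\ref{Ineq123})--(\ref{1Dvarupbound}).

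The main obstacle is the localization of Lemma \ref{MainLemma}: obtaining the contraction estimate requires integrating $|\partial_{x}f(\cdot,\nu)|$ along the segment joining $x(t)$ and $x(t+1)$, so one needs that whole segment --- not merely its endpoints --- to lie in $\Omega(\theta)$. I would handle this exactly as the global result implicitly does: restrict attention to the connected component of $\Omega(\theta)$ containing the initialization (so $\Omega(\theta)$ may be taken to be an interval without loss of generality), or equivalently assume the derivative bound on the smallest interval containing $\Omega(\theta)$; neither change affects the invariance conclusion or the IMV bound. A minor related point: if the infimum defining $\widetilde{\theta}^{\ast}$ is not attained, one runs the induction with $\widetilde{\theta}^{\ast}+\varepsilon$ for every $\varepsilon>0$ with $\widetilde{\theta}^{\ast}+\varepsilon\le\theta$ and uses that $\Omega(\widetilde{\theta}^{\ast})=\bigcap_{\varepsilon>0}\Omega(\widetilde{\theta}^{\ast}+\varepsilon)$ together with the fact that an intersection of invariant sets is invariant.
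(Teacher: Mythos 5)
The paper states Corollary \ref{LocalInvariance} without proof, and your reconstruction is the intended one: the single genuinely new ingredient is the inclusion $\Omega(\widetilde{\theta}^{\ast})\subseteq\Omega(\theta)$, which you correctly extract from the monotonicity of $\zeta_{\kappa}(\cdot)$ in (\ref{alphatheta}) together with the hypothesis $\widetilde{\theta}^{\ast}\leq\theta$; after that the chain (\ref{FirstIneq1})--(\ref{FourthIneq4}) is re-run with $\widetilde{\theta}^{\ast}$ in place of $\theta$, the choice $\tau=\widetilde{\theta}^{\ast}$ kills the right-hand side by the defining property of $\zeta_{\kappa}$, and the choice $\tau=1$ plus summation gives (\ref{1Dvarupbound}) with $\theta=\widetilde{\theta}^{\ast}$. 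In substance you have reproduced the argument the corollary is meant to rest on, including the $\varepsilon$-approximation of the infimum.

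The one place the write-up does not close is the localization issue you yourself flag, and your proposed patch does not actually resolve it. The estimate $\sup_{\nu\in U}\left\vert f(x(t+1),\nu)-f(x(t),\nu)\right\vert\leq\widetilde{\theta}^{\ast}\left\vert \Delta g_{t}\right\vert$ requires $\left\vert\partial_{x}f\right\vert\leq\widetilde{\theta}^{\ast}\left\vert\dot{g}\right\vert$ on the entire segment joining $x(t)$ and $x(t+1)$. The induction hypothesis places $x(t)$ in $\Omega(\widetilde{\theta}^{\ast})\subseteq\Omega(\theta)$, but at the moment the estimate is invoked nothing yet locates $x(t+1)$; taking $\Omega(\theta)$ to be an interval, or passing to its interval hull, does not help, because the iterate may land outside that interval, and the estimate is needed precisely to exclude this --- the reasoning is circular. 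What is known a priori is only $\left\vert g(x(t+1))-g(x(t))\right\vert=\left\vert f(x(t),u(t))-g(x(t))\right\vert\leq\zeta_{\kappa}(\widetilde{\theta}^{\ast})+\kappa$, i.e.\ $x(t+1)$ lies in a $g$-neighborhood of $\Omega(\widetilde{\theta}^{\ast})$ of that radius. The clean repair is to require the derivative bound defining $\widetilde{\theta}^{\ast}$ on the one-step reachable set $\{y:\left\vert g(y)-g(x)\right\vert\leq\zeta_{\kappa}(\theta)+\kappa\ \text{for some}\ x\in\Omega(\theta)\}$ rather than on $\Omega(\theta)$ alone; with that strengthening your induction, your treatment of the infimum, and your derivation of the IMV bound all go through verbatim. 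This is arguably a defect of the corollary as stated (the paper's surrounding text even speaks of the bound holding on ``a subset that contains $\Omega(\theta)$''), but as written your inductive step does not close.
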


Consider equation (\ref{dswithu}) or (\ref{generalARModel}). When the
functions $g$ and $f$\ are $\rho$-scaled supply and demand functions, the
minimal $\theta$ satisfying (\ref{ThetaConstraint}) or (\ref{MDCondition})
will be the MRPE associated with the market models (\ref{dswithu}) or
(\ref{generalARModel}). When $g$ and $f$\ are $\rho$-scaled marginal value and
marginal cost functions respectively, the minimal $\theta$ satisfying the
inequalities will be the MRRA associated with (\ref{dswithu}) or
(\ref{generalARModel}). In the remainder of this section, we consider
applications of Theorem \ref{1D-Invariancewithu} to the two time-varying
models of consumer behavior (\ref{TVDF1a}) and (\ref{TVDF2a}).

\paragraph{Multiplicative Perturbation}

Consider the multiplicative perturbation model (\ref{TVDF2a}). Under this
model, the market dynamics is given by
\begin{equation}
\dot{c}^{-1}\left(  \lambda\left(  t+1\right)  \right)  =\left(  1+\frac{1}%
{2}\delta\left(  t\right)  \right)  \dot{v}^{-1}\left(  \lambda\left(
t\right)  \right)  ,\text{\quad}\delta\left(  t\right)  \in\left[
-\kappa,\kappa\right] \label{MultPurtDyn}%
\end{equation}
where the $1/2$ factor in front of $\delta\left(  t\right)  $ is simply a
scaling factor. We invoke Theorem \ref{1D-Invariancewithu} with
\begin{equation}
g\left(  \lambda\right)  =\log\left(  \dot{c}^{-1}\left(  \lambda\right)
\right) \label{glislogcdil}%
\end{equation}
and%
\begin{align*}
f\left(  \lambda,\delta\right)   & =\log\left(  1+\delta/2\right)  \dot
{v}^{-1}\left(  \lambda\right) \\
& =\log\left(  1+\delta/2\right)  +\log\left(  \dot{v}^{-1}\left(
\lambda\right)  \right)  .
\end{align*}
It can be verified that (\ref{UConstraint}) and (\ref{ThetaConstraint}) are
satisfied as long as $\kappa\leq1$ and $\theta^{\ast}<1,$ where $\theta^{\ast
}$ is the MRPE defined in (\ref{maxrelPel}). Furthermore, $\zeta_{\kappa
}\left(  \theta^{\ast}\right)  $ is the upperbound on the size of the
invariant set, where $\zeta_{\kappa}\left(  \cdot\right)  $ is defined in
(\ref{alphatheta}). In particular as $\theta^{\ast}\rightarrow1,$ small
perturbations may induce extremely large fluctuations as measured by $\log
$-scaled IMV of supply. The theoretical upperbound is $1/\left(
1-\theta^{\ast}\right)  .$ When Corollary \ref{LocalInvariance} is applicable,
the size of the invariant set can be characterized by $\zeta_{\kappa
}(\widetilde{\theta}^{\ast}),$ where $\widetilde{\theta}^{\ast}$ is the
market's local relative price-elasticity. Furthermore, volatility can be
characterized by $\widetilde{\theta}^{\ast}$ as well.

\paragraph{Additive Perturbation}

Under the additive perturbation model (\ref{TVDF1a}), the market dynamics can
be written as%
\begin{equation}
\dot{c}^{-1}\left(  \lambda\left(  t+1\right)  \right)  =u_{0}+\frac{1}%
{2}u\left(  t\right)  +\dot{v}^{-1}\left(  \lambda\left(  t\right)  \right)
,\text{\quad}u\left(  t\right)  \in\left[  -\kappa,\kappa\right]
\label{PeriodicD}%
\end{equation}
where $u_{0}\geq1$ is a shifting factor, and $\kappa\leq u_{0},$ so that the
demand is always at least $u_{0}/2.$ We invoke Theorem
\ref{1D-Invariancewithu} with (\ref{glislogcdil}) and%
\[
f\left(  \lambda,u\right)  =\log\left(  u_{0}+\frac{1}{2}u+\dot{v}^{-1}\left(
\lambda\right)  \right)
\]
Then, under the above assumptions, (\ref{UConstraint}) is satisfied. In a
similar fashion to previous analyzes, (\ref{ThetaConstraint}) can be related
to the market's relative price-elasticity. In this case, the price-elasticity
of demand turns out to be:
\[
\epsilon_{D}\left(  \lambda\right)  =\frac{\partial f\left(  \lambda,u\right)
}{\partial\lambda}=\frac{\lambda}{u_{0}+u/2+\dot{v}^{-1}\left(  \lambda
\right)  }\frac{\partial\dot{v}^{-1}\left(  \lambda\right)  }{\partial\lambda}%
\]
The larger the minimum of the inelastic component (i.e., $u_{0}-\kappa/2$),
the smaller the price-elasticity of the overall demand will be. Under the
assumptions made above, there is always a nonzero minimal demand $u_{\min
}\left(  t\right)  =u_{0}/2$. Therefore, it is sufficient to verify
(\ref{ThetaConstraint}) over $\lambda\geq\dot{c}\left(  u_{0}/2\right)  $
instead of all $\lambda>0.$ In conclusion, (\ref{ThetaConstraint}) reduces
to:\
\begin{equation}
\left\vert \frac{\partial\dot{v}^{-1}\left(  \lambda\right)  /\partial\lambda
}{u_{0}/2+\dot{v}^{-1}\left(  \lambda\right)  }\right\vert \leq\theta
\left\vert \frac{\partial\dot{c}^{-1}\left(  \lambda\right)  /\partial\lambda
}{\dot{c}^{-1}\left(  \lambda\right)  }\right\vert ,\quad\forall\lambda
\geq\dot{c}\left(  u_{0}/2\right) \label{dadadadada}%
\end{equation}
Let $\widetilde{\theta}^{\ast}$ be the minimal $\theta$ satisfying
(\ref{dadadadada}). Similar to the case with multiplicative uncertainty, in
this case too, the upperbound on the size of the invariant set is given by
$\zeta_{\kappa}(\widetilde{\theta}^{\ast})$, where $\zeta_{\kappa}\left(
\cdot\right)  $ is given in (\ref{alphatheta}). Moreover, the $\log$-scaled
IMV of supply is upperbounded by $u_{0}/(1-\widetilde{\theta}^{\ast}).$

The analysis\ reconfirms the intuition that participation of a small portion
of the population in real-time pricing will not have a severe destabilizing
effect on the system, as satisfying (\ref{dadadadada}) for larger values of
$u_{0}$ is easier. System stability concerns should arise when a large portion
of the population is exposed to real-time pricing.

\begin{remark}
It can be proven that when $u\left(  t\right)  $ is a periodic function with
period $T$ and (\ref{ThetaConstraint}) is satisfied, then all solutions of
(\ref{PeriodicD})\ converge to a periodic trajectory with period $T.$
\end{remark}

\subsection{Volatility}

The following corollaries follow from Theorems \ref{MD-Invariance}\ and
\ref{1D-Invariancewithu}, and explicitly relate the market's MRPE and MRRA to volatility.

\begin{corollary}
\label{Volatility I}\textbf{Volatility I:} Let $\theta^{\ast}<1$ and
$\eta^{\ast}<1$ be the MRPE and MRRA associated with the market model
(\ref{dswithu}). Then, there exists a constant $C$, depending on the size of
the disturbances only, such that the $\log$-scaled IMV of supply is
upperbounded by $C/\left(  1-\theta^{\ast}\right)  ,$ i.e.,%
\begin{equation}
\lim_{T\rightarrow\infty}\frac{1}{T}%
{\displaystyle\sum\limits_{t=1}^{T}}
\left\vert \log\left(  \dot{c}^{-1}\left(  \lambda\left(  t+1\right)  \right)
\right)  -\log\left(  \dot{c}^{-1}\left(  \lambda\left(  t\right)  \right)
\right)  \right\vert \leq\frac{C}{1-\theta^{\ast}}\label{vol-I}%
\end{equation}
And the $\log$-scaled IMV of price is upperbounded by $C/\left(  1-\eta^{\ast
}\right)  ,$ i.e.,%
\begin{equation}
\lim_{T\rightarrow\infty}\frac{1}{T}%
{\displaystyle\sum\limits_{t=1}^{T}}
\left\vert \log\left(  \lambda\left(  t+1\right)  \right)  -\log\left(
\lambda\left(  t\right)  \right)  \right\vert \leq\frac{C}{1-\eta^{\ast}%
}\medskip\label{vol-II}%
\end{equation}

\end{corollary}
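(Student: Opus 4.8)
The plan is to recognize that both bounds are \emph{direct instances} of the IMV estimate \eqref{1Dvarupbound} already proved in Theorem \ref{1D-Invariancewithu}, applied twice with two different choices of the pair $(f,g)$. The only genuine work is to check that, under the market model \eqref{dswithu} with disturbances confined to a set $U=\{u:|u|\le\kappa\}$, the hypotheses \eqref{UConstraint}--\eqref{ThetaConstraint} of Theorem \ref{1D-Invariancewithu} hold with the stated $\theta$ and with some $\kappa$ governed by the size of the disturbances, so that the constant $C$ in \eqref{vol-I}--\eqref{vol-II} can be taken to be a fixed multiple of $\kappa$ (indeed $C=2\kappa$).

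First I would establish \eqref{vol-I}. Take $g(\lambda)=\log\bigl(\dot c^{-1}(\lambda)\bigr)$ and $f(\lambda,u)=\log\bigl(\dot c^{-1}(\lambda(t{+}1))\bigr)$ expressed through the model equation \eqref{dswithu}, exactly as in the multiplicative/additive specializations already worked out in \eqref{glislogcdil} and the surrounding paragraphs; the point there was precisely that $|\partial f/\partial u|\le1$ (this is condition \eqref{UConstraint}) and that the minimal $\theta$ satisfying $|\partial f/\partial\lambda|\le\theta|\dot g(\lambda)|$ equals the $\log$-scaled relative price-elasticity, i.e.\ the MRPE $\theta^{\ast}$, via the identity $\dot g(\lambda)=\epsilon_S^{\mathrm p}(\lambda)/\lambda$ and the analogous expression for $\partial f/\partial\lambda$. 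Since by hypothesis $\theta^{\ast}<1$, Theorem \ref{1D-Invariancewithu} applies and \eqref{1Dvarupbound} gives
\[
\overline{\mathcal V}_g(\lambda)=\lim_{T\to\infty}\frac1T\sum_{t=1}^{T}\bigl|\log(\dot c^{-1}(\lambda(t{+}1)))-\log(\dot c^{-1}(\lambda(t)))\bigr|\le\frac{2\kappa}{1-\theta^{\ast}},
\]
which is \eqref{vol-I} with $C=2\kappa$.

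Then \eqref{vol-II} follows by the dual choice of decomposition: write the market dynamics in the demand form \eqref{TheEq3}--\eqref{Demand D}, i.e.\ in terms of marginal value and marginal cost, and invoke Theorem \ref{1D-Invariancewithu} with $g(\cdot)=\log\bigl(\dot v(\cdot)\bigr)$ and $f$ the $\log$ of the composition appearing on the right-hand side. Here the minimal admissible $\theta$ in \eqref{ThetaConstraint} is, by the same computation as in Corollary \ref{MarketSCol-I} with $\rho=\log$ (cf.\ the proof of Corollary \ref{MarketSCol-II}), exactly the $\log$-scaled relative risk-aversion, i.e.\ the MRRA $\eta^{\ast}$; since $\eta^{\ast}<1$, Theorem \ref{1D-Invariancewithu} again yields \eqref{1Dvarupbound}, now reading
\[
\lim_{T\to\infty}\frac1T\sum_{t=1}^{T}\bigl|\log(\lambda(t{+}1))-\log(\lambda(t))\bigr|\le\frac{2\kappa}{1-\eta^{\ast}},
\]
because $\lambda=\dot v(d)$ so that the $\log(\dot v(\cdot))$-scaled IMV of $d$ coincides with the $\log$-scaled IMV of $\lambda$. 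Taking $C=2\kappa$ in both places completes the proof. The only step requiring care — and the one I would write out — is the verification that the two $\log$-scaled decompositions genuinely satisfy \eqref{UConstraint} (which forces the normalization of $\kappa$ against the disturbance magnitude, and is the reason $C$ depends only on the disturbance size) and that their $\theta$-constants are, respectively, $\theta^{\ast}$ and $\eta^{\ast}$; everything after that is an immediate appeal to Theorem \ref{1D-Invariancewithu}.
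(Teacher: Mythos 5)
Your proposal is correct and is essentially the paper's own (unwritten) argument: the paper simply asserts that the corollary "follows from Theorems \ref{MD-Invariance} and \ref{1D-Invariancewithu}," and your two applications of the bound \eqref{1Dvarupbound} --- once with $g=\log\circ\,\dot c^{-1}$ so that the minimal $\theta$ in \eqref{ThetaConstraint} is the MRPE $\theta^{\ast}$, and once with $g=\log\circ\,\dot v$ on the demand-side dynamics so that the minimal $\theta$ is the MRRA $\eta^{\ast}$ and the $g$-scaled IMV of demand coincides with the $\log$-scaled IMV of price via $\lambda=\dot v(d)$ --- supply exactly the missing details, with $C=2\kappa$ depending only on the disturbance bound as claimed.
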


\begin{corollary}
\label{Volatility II}\textbf{Volatility II:} Let $\theta^{\ast}<1$ and
$\eta^{\ast}<1$ be the MRPE and MRRA associated with the market model
(\ref{generalARModel}) with linear autoregressive prediction. Then, there
exists a constant $C$ such that the $\log$-scaled IAV of supply is
upperbounded by $C/\left(  1-\theta^{\ast}\right)  ,$ i.e.,%
\begin{equation}%
{\displaystyle\sum\limits_{t=1}^{\infty}}
\left\vert \log\left(  \dot{c}^{-1}\left(  \lambda\left(  t+1\right)  \right)
\right)  -\log\left(  \dot{c}^{-1}\left(  \lambda\left(  t\right)  \right)
\right)  \right\vert \leq\frac{C}{1-\theta^{\ast}}\label{vol-III}%
\end{equation}
And the $\log$-scaled IAV of price is upperbounded by $C/\left(  1-\eta^{\ast
}\right)  ,$ i.e.,%
\begin{equation}%
{\displaystyle\sum\limits_{t=1}^{\infty}}
\left\vert \log\left(  \lambda\left(  t+1\right)  \right)  -\log\left(
\lambda\left(  t\right)  \right)  \right\vert \leq\frac{C}{1-\eta^{\ast}%
}\medskip\label{vol-IV}%
\end{equation}

\end{corollary}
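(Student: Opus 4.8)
The plan is to derive Corollary \ref{Volatility II} as a direct specialization of Theorem \ref{MD-Invariance}, in exactly the same spirit that Corollary \ref{Volatility I} follows from Theorem \ref{1D-Invariancewithu}. First I would recall the model (\ref{AX Price D Simple}), which under linear autoregressive prediction reads $\dot c^{-1}(\lambda(t+1)) = \sum_{k=0}^{n}\alpha_k \dot v^{-1}(\lambda(t-k))$, and observe that the one-step-ahead term $k=0$ is special: writing $\dot c^{-1}(\lambda(t+1)) - \alpha_0 \dot v^{-1}(\lambda(t+1))$ would be the natural ``$g$'' if we wanted to match the abstract form $g(x(t+1)) = f(x(t),\dots,x(t-n))$, but the cleaner route followed elsewhere in the paper (see equations (\ref{rho1})--(\ref{rho2})) is to absorb the $k=0$ term differently. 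So the first step is to put the system into the abstract form (\ref{generalARModel}) with $g(\lambda) = \rho(\dot c^{-1}(\lambda))$ and $f(\lambda_t,\dots,\lambda_{t-n}) = \rho\!\left(\sum_{k=0}^n \alpha_k \dot v^{-1}(\lambda_{t-k})\right)$, where $\rho = \log$, so that $g$-scaled IAV of supply is precisely $\sum_t |\log \dot c^{-1}(\lambda(t+1)) - \log \dot c^{-1}(\lambda(t))|$, the quantity we must bound.

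Next I would verify the hypotheses (\ref{MDCondition})--(\ref{sumthetald1}) of Theorem \ref{MD-Invariance}. Computing $\partial f/\partial \lambda_{t-k}$ and $\dot g$ and forming their ratio, the condition $|\partial f/\partial y_k| \le \theta_k |\dot g(y_k)|$ becomes exactly the inequality (\ref{2complicated}) already displayed in the text, with $\sum_{k=1}^n \theta_k \le 1$ the requirement (\ref{2complicatedb}). The content of the corollary is that, when the MRPE $\theta^\ast < 1$, one can choose such $\theta_k$'s summing to something strictly below one: intuitively $\theta^\ast$ controls the supremum of $|\epsilon^{\mathrm p}_D/\epsilon^{\mathrm p}_S|$, and the autoregressive weights redistribute this bound across the lags, so that $\sum_k \theta_k$ is governed by $\theta^\ast$ times the normalization of the $\alpha_k$'s. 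With the persistence-model normalization (the $\alpha_k$ summing appropriately), $\sum_{k=1}^n \theta_k$ can be taken equal to $\theta^\ast$ up to the disturbance scaling. Then the strict-inequality conclusion (\ref{MDvarupbound}) of Theorem \ref{MD-Invariance} gives $\mathcal V_g(\lambda) \le \gamma_0/(1 - \sum_k \theta_k) \le \gamma_0/(1-\theta^\ast)$, and since $\gamma_0$ depends only on the finitely many initial states (equivalently, in the perturbed setting, on the size of the disturbances), we set $C = \gamma_0$ and obtain (\ref{vol-III}). The bound (\ref{vol-IV}) on the $\log$-scaled IAV of price follows by the symmetric argument, swapping the roles of cost and value functions: take $g(\lambda) = \rho(\lambda)$ directly and $f$ the $\rho$-scaled composition on the other side, so that the relevant constant becomes the MRRA $\eta^\ast$ rather than $\theta^\ast$; this is the same duality already used to pass between (\ref{rsslrvv}) and (\ref{simple}) in Corollary \ref{MarketSCol-I}.

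The main obstacle I anticipate is the bookkeeping in relating the abstract Lipschitz constants $\theta_k$ to the single scalar $\theta^\ast$ (and likewise $\eta^\ast$) \emph{uniformly in $\lambda$}: the ratio in (\ref{2complicated}) contains the extra factor $\big[\sum_j \alpha_j \dot v^{-1}(\lambda_{t-j})\big]^{-l}$ coming from the generalized elasticity exponent, and one must argue that with the logarithmic scaling $\rho = \log$ (the $l=1$ case) this factor combines cleanly so that the supremum of the left-hand side over all $\lambda$ is exactly $|\alpha_k|$ times a fixed ratio bounded by $\theta^\ast$. For power-law cost and value functions this was already noted in the text to be independent of the $\alpha_k$'s, which is what makes the uniform choice of $\theta_k \propto |\alpha_k|$ with $\sum_k \theta_k \le \theta^\ast$ legitimate; in general one invokes Corollary \ref{col:MD-Invariance} to restrict attention to the invariant set $\widetilde\Omega_0$, over which the required inequality holds locally, and the same conclusion follows with $\theta^\ast$ replaced by its local counterpart. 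I would state the corollary's proof at the level of ``apply Theorem \ref{MD-Invariance} with $g = \log \circ\, \dot c^{-1}$, $f = \log\!\big(\sum_k \alpha_k \dot v^{-1}(\cdot)\big)$, note that (\ref{MDCondition})--(\ref{sumthetald1}) hold with $\sum_k \theta_k \le \theta^\ast < 1$ by Definition \ref{def:GenEl}, and read off (\ref{MDvarupbound}); the price bound is dual with $\eta^\ast$ in place of $\theta^\ast$,'' rather than grinding through the derivative computation, which is routine once the scaling function is fixed.
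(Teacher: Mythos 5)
Your overall route is the paper's: the corollary is meant to be read off from Theorem \ref{MD-Invariance} by taking $g=\log\circ\,\dot{c}^{-1}$ and $f=\log\bigl(\sum_{k}\alpha_{k}\dot{v}^{-1}(\cdot)\bigr)$ for the supply bound (\ref{vol-III}), the dual choice $g=\log\circ\,\dot{v}$, $f=\log\circ\,\dot{c}\circ\phi$ on the demand-dynamics side for the price bound (\ref{vol-IV}), and $C=\gamma_{0}$ from (\ref{MDvarupbound}). The paper gives no further argument, and your identification of the two decompositions and of $\gamma_{0}$ as the constant is exactly right.

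The one step that does not hold as you state it is the claim that (\ref{MDCondition})--(\ref{sumthetald1}) hold with $\sum_{k}\theta_{k}\leq\theta^{\ast}$ ``by Definition \ref{def:GenEl},'' where $\theta^{\ast}$ is the global MRPE of (\ref{maxrelPel}). For $n\geq1$ the left-hand side of (\ref{2complicated}) contains the cross-lag denominator $\sum_{j}\alpha_{j}\dot{v}^{-1}(\lambda_{t-j})$, so the ratio $\bigl|\partial f/\partial y_{k}\bigr|/\bigl|\dot{g}(y_{k})\bigr|$ depends on \emph{all} coordinates $\lambda_{t},\ldots,\lambda_{t-n}$, not only on $\lambda_{t-k}$; its supremum over $\mathbb{R}_{+}^{n+1}$ is not controlled by $\sup_{\lambda}\bigl|\epsilon_{\text{rel}}^{\text{p}}(\lambda,1)\bigr|$, and the clean reduction to the relative elasticity in (\ref{localanalysis}) is obtained in the paper only by setting all lags equal (i.e., at equilibrium). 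The paper itself flags that (\ref{2complicated})--(\ref{2complicatedb}) ``in general demand numerical computation for verification.'' The way the corollary is actually meant to be read is via the paragraph preceding it: for the model (\ref{generalARModel}) the MRPE ``associated with'' the model is \emph{defined} as the minimal $\theta=\sum_{k}\theta_{k}$ satisfying (\ref{MDCondition}), so the hypothesis $\theta^{\ast}<1$ already \emph{is} the hypothesis of Theorem \ref{MD-Invariance}, and the corollary is then a one-line application of (\ref{MDvarupbound}). Your attempt to derive the $\theta_{k}$'s from the scalar elasticity ratio of Definition \ref{def:GenEl} is the only part of the argument that would need either this redefinition or a restriction to the invariant set via Corollary \ref{col:MD-Invariance} (which you do mention as a fallback); with that fix the rest of your proof is sound.
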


\begin{remark}
Generalized versions\ of the above corollaries can be formulated based on
$\theta^{\ast}\left(  l\right)  $ and $\eta^{\ast}\left(  l\right)  ,$ in
which case the scalings of the signals need to be defined\ accordingly:
letting $\rho_{l}\left(  x\right)  =x^{-l+1}$ for $l\neq1,$ the $\rho_{l}%
$-scaled IMV of supply and price will be upperbounded by $C/\left(
1-\theta^{\ast}\left(  l\right)  \right)  $ and $C/\left(  1-\eta^{\ast
}\left(  l\right)  \right)  $ respectively$.$ Furthermore, when the prices
remain bounded within an invariant set, e.g., when the conditions of Corollary
\ref{col:MD-Invariance} or Corollary \ref{LocalInvariance} hold, one can
replace $\theta^{\ast}\left(  l\right)  $ and $\eta^{\ast}\left(  l\right)  $
with local relative elasticity ratios $\widetilde{\theta}^{\ast}\left(
l\right)  $ and $\widetilde{\eta}^{\ast}\left(  l\right)  .$
\end{remark}

\subsection{Robustness and Incremental L2-Gain}

The $\rho$\textit{-scaled incremental L2-gain} of a discrete-time dynamical
system with input signal $u:\mathbb{Z}\rightarrow\mathbb{R}$ and output signal
$h:\mathbb{Z}\rightarrow\mathbb{R}$ is defined to be the minimal $\gamma\geq0$
such that the inequality
\begin{equation}
\gamma\left\Vert \rho\left(  u\right)  -\rho\left(  \bar{u}\right)
\right\Vert _{2}-\left\Vert \rho\left(  h\right)  -\rho\left(  \bar{h}\right)
\right\Vert _{2}\geq0\label{IncL2GDef}%
\end{equation}
is satisfied for all input/output pairs $\left(  u,h\right)  $ and $\left(
\bar{u},\bar{h}\right)  $ such that%
\[
\rho\left(  u\right)  -\rho\left(  \bar{u}\right)  \in\ell_{2}.
\]
For systems with larger gains, it is generally expected that
\textit{relatively small} deviations from a nominal input $\bar{u}$ would stir
\textit{relatively larger} deviations from the nominal output signal $\bar{h}
$. This gain can be used as a metric for assessing the robustness/sensitivity
of the system to arbitrary external disturbances. It can be proven that for
the market model (\ref{MultPurtDyn})\ (more generally, the market model
(\ref{dswithu}) with multiplicative uncertainty), the $\log$-scaled
incremental L2-gain from the perturbation $\delta\left(  \cdot\right)  $ to
the demand is upperbounded by $\theta^{\ast}/\left(  1-\theta^{\ast}\right)
.$ The gain from $\delta\left(  \cdot\right)  $ to the supply is upperbounded
by $1/\left(  1-\theta^{\ast}\right)  .$ These results--stated here without
proof--quantify the dependence of the closed-loop system's robustness, as
measured by the incremental L2-gain, on the markets maximal relative price-elasticity.

\section{Discussion\label{sec:dis}}

Cho and Meyn \cite{Meycho} have investigated the problem of volatility of
power markets in a dynamic general equilibrium framework. Their model can be
viewed as a full-information model in which the system operator has full
information about the cost and value functions of the producers and consumers.
Market clearing is instantaneous and supply and demand are matched with no
time lag. The producer's problem is, however, subject to supply friction or a
ramp constraint, i.e., a finite bound on the rate of change of supply
capacity. It is concluded that efficient equilibria are volatile and
volatility is attributed to the supply friction. In the formulation of
\cite{Meycho} the consumer's problem is not subject to ramp constraints. In
our formulation, neither the consumer's problem nor the producer's is
explicitly subject to ramp constraints, yet other factors are shown to
contribute to volatility, namely, information asymmetry and high price
elasticity of demand. Interestingly, if we included ramp constraints in the
consumer's problem it would have a stabilizing effect, as it would limit the
consumer's responsiveness to price signals and reduce her elasticity. This
effect is implicitly and qualitatively captured in our framework through the
introduction of an inelastic component in the demand, which certainly limits
the rate of change in the demand, and was shown to have a stabilizing effect.
However, uncertainty in the supply side, either in the available capacity or
in the cost, works in the reverse direction:\ when supply is sufficiently
volatile, a trade-off might exist and responsiveness and increased elasticity
of demand might be desirable, though this needs to be quantified rigorously.
The models developed in the paper do not include uncertainty in generation,
and this would be an interesting direction for future research.

The above discussion leads to another interesting question: "quantifying the
value of information in closed-loop electricity markets". Given the
heterogeneous nature of consumers and time-varying uncertainty in their
preferences, needs, and valuations for electricity, learning their value
functions and predicting their response to a price signal in real-time appears
to be a difficult problem. Suppose that the consumers provide a real-time
estimate of their inelastic and elastic consumption to the ISO. How valuable
will this real-time information be and what would be its impact on volatility
and reliability of the system? Given the potentially significant costs and
barriers associated with obtaining such information in real-time, quantifying
the value of information in this context seems an extremely important and
timely question with potentially significant impact the architecture of future
power grids.

\section{Numerical Simulations\label{sec:sim}}

In this section we present the results of some numerical simulation. For the
purpose of simulations, we use the following demand model:%
\begin{equation}
D\left(  t\right)  =\mu_{1}d_{1}\left(  t\right)  +\mu_{2}\left(  1+\delta
_{2}\left(  t\right)  \right)  \dot{v}^{-1}\left(  \lambda\left(  t\right)
\right) \label{simdemF}%
\end{equation}
where $d_{1}\left(  t\right)  $ is the exogenous, inelastic demand:%
\[
d_{1}\left(  t\right)  =a_{0}+a_{1}\sin\left(  t\right)  +a_{2}\sin\left(
2t\right)  +\delta_{1}\left(  t\right)
\]
and $\delta_{1}\left(  t\right)  \sim\mathcal{N}\left(  0,0.1^{2}\right)  $
and $\delta_{2}\left(  t\right)  \sim\mathcal{N}\left(  0,0.01^{2}\right)  $
are random disturbances. The parameters $\mu_{1}$ and $\mu_{2}$ are adjusted,
on a case-to-case basis, such that the average demand under real-time pricing
(i.e., when $\mu_{2}>0,$ $\mu_{1}<1$) remains nearly equal to the average
demand in the open loop market ($\mu_{2}=0,$ $\mu_{1}=1$), that is:%
\[%
{\displaystyle\sum\nolimits_{t=1}^{N}}
D\left(  t\right)  \approx%
{\displaystyle\sum\nolimits_{t=1}^{N}}
d_{1}\left(  t\right)
\]
This normalization, takes out the effect of higher or lower average demand on
price and allows for a fair comparison of volatility of prices in open-loop
and closed-loop markets. The following parameters are chosen for all
simulations in this section:%
\[
a_{0}=4\text{ GW, }a_{1}=1\text{ GW, }a_{2}=1\text{ GW}%
\]
This puts the peak of the inelastic demand at $6$ GW and the valley at $2$ GW,
modulo the random disturbance $\delta_{1}\left(  t\right)  .$ All simulations
are for a $24$ hour period and prices are updated every $5$ minutes. The
average demand in all simulations is approximately $4$ GW per five minutes for
both open-loop and closed-loop markets. The metric for comparison in these
simulation is the Relative Volatility Ratio (RVR), defined as the ratio of the
$\log$-scaled IAV of the closed-loop market to the $\log$-scaled IAV of the
open-loop market. The results of the first simulation are summarized in Figure
\ref{FIGURE2}. The prices are extremely volatile under real-time pricing (RVR
$=51.12$) and the system is practically unstable.%

\begin{figure}
[ptbh]
\begin{center}
\includegraphics[
height=2.4016in,
width=3.1661in
]%
{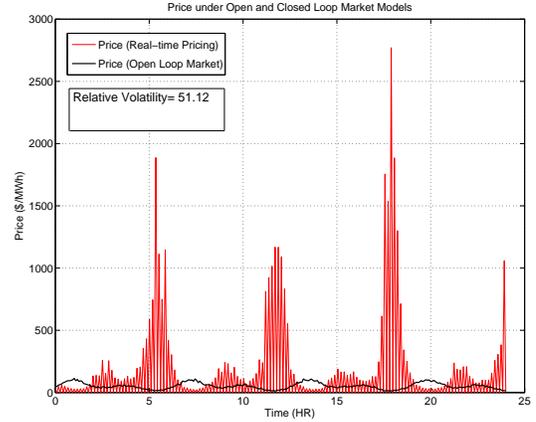}%
\caption{Simulation of a market with quadratic cost function\ $c\left(
x\right)  =x^{3},$ value function $v\left(  x\right)  =\log\left(  x\right)
,$ and demand function $D\left(  t\right)  $ given in (\ref{simdemF}) with
$\mu_{1}=0.075,$ $\mu_{2}=2.$}%
\label{FIGURE2}%
\end{center}
\end{figure}

The results of the second simulation are summarized in Figure \ref{FIGURE3}.
Based on the chosen parameters, this market is less volatile than the one in
the first simulation, yet, volatility of demand increases under real-time
pricing (RVR=$2.33$). Since in this simulation the cost is quadratic, the
price (not shown)\ has a very similar pattern.%

\begin{figure}
[ptbh]
\begin{center}
\includegraphics[
height=2.4016in,
width=3.1652in
]%
{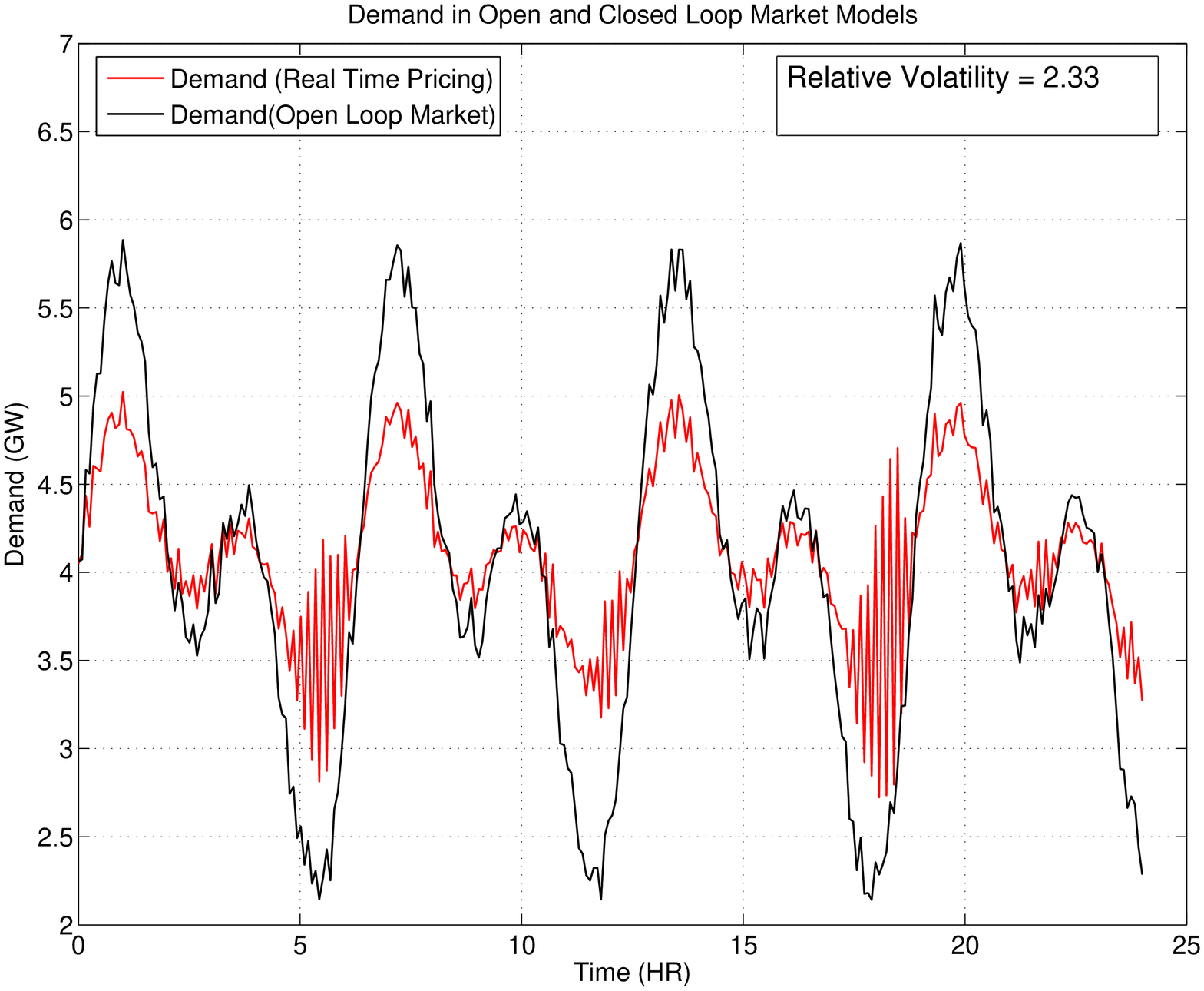}%
\caption{Simulation of a market with quadratic cost function\ $c\left(
x\right)  =3x^{2},$ value function $v\left(  x\right)  =\sqrt{x},$ and demand
function $D\left(  t\right)  $ in (\ref{simdemF}) with $\mu_{1}=0.7,$ $\mu
_{2}=3\times10^{3}.$ }%
\label{FIGURE3}%
\end{center}
\end{figure}

The third simulation is summarized in Figure \ref{FIGURE3}. For each value of
$\mu_{1}\in\left[  0,1\right]  $ (with $0.05$ increments), the expected RVR
was calculated by taking the average RVR of $50$ randomized simulations. The
random parameters are $\delta_{1}\left(  t\right)  ,$ $\delta_{2}\left(
t\right)  ,$ and the initial conditions. The experiment was repeated for four
different value functions: $v\left(  x\right)  =x^{1/a},$ $a=4,4.5,5,5.5.$ It
is observed that volatility increases with decreasing $a$ or $\mu_{1},$ both
of which increase the price-elasticity of demand.%

\begin{figure}
[ptbh]
\begin{center}
\includegraphics[
height=2.5598in,
width=3.3762in
]%
{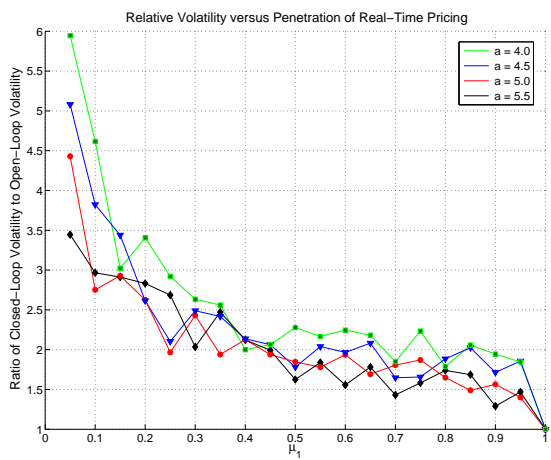}%
\caption{Simulation of a market with quadratic cost function $c(x)=3x^{2},$
value function $v(x)=x^{1/a}$, and demand function $D(t)$ given in
(\ref{simdemF}) with $\mu_{1}\in\lbrack0,1],$ and $\mu_{2}$ adjusted
accordingly to keep the total demand constant. Decreasing $a$ or $\mu_{1}$
increase the price-elasticity of the overall demand and hence, increase
volatility.}%
\label{FIGURE4}%
\end{center}
\end{figure}

\section{Conclusions and Future Work\label{sec:concl}}

We investigated the effects of real-time pricing on the stability and
volatility of electricity markets, and showed that exposing the retail
consumers to the real-time wholesale market prices creates a closed-loop
feedback system which could be very volatile or even unstable. When the system
is stable, an upper bound on volatility and robustness to external
disturbances can be characterized in terms of the market's relative
price-elasticity, defined as the ratio of \textit{generalized }%
price-elasticity of consumers to that of the producers. As this ratio
increases, the system may become more volatile, eventually becoming unstable
when the ratio exceeds one. As the penetration of new demand response
technologies and distributed storage within the power grid increases, so does
the price-elasticity of demand, and this is likely to increase volatility and
possibly destabilize the system under current market and system operation
practices. While the system can be stabilized and volatility can be reduced in
many different ways, e.g., via static or dynamic controllers regulating the
interaction of wholesale markets and retail consumers, different pricing
mechanisms pose different limitations on competing factors of interest. In
light of this, systematic analysis of the implications of different pricing
mechanisms, and quantifying the value of information and characterization of
the fundamental trade-offs between price volatility and economic efficiency,
as well as system reliability and environmental efficiency are important
directions of future research. In summary, more sophisticated models of
demand, a deeper understanding of consumer behavior in response to real-time
prices, and a thorough understanding of the implications of different market
mechanisms and system architectures are needed before real-time pricing can be
implemented in large-scale.

\end{document}